\title{Gerrymandering Planar Graphs}
\author{Jack Dippel}
\affiliation{
  \institution{McGill University}
  \city{}
  \country{}
  }
\email{}
\author{Max Dupré la Tour}
\affiliation{
  \institution{McGill University}
  \city{}
  \country{}
  }
  \author{April Niu}
\affiliation{
  \institution{McGill University}
  \city{}
  \country{}
  }
  \author{Sanjukta Roy}
\affiliation{
  \institution{Penn State University}
  \city{}
  \country{}
  }
\author{Adrian Vetta}
\affiliation{
  \institution{McGill University}
  \city{}
  \country{}
  }
\begin{abstract}
We study the computational complexity of the map redistricting problem (gerrymandering). Mathematically, the electoral district designer (gerrymanderer) attempts to partition a weighted graph into $k$ connected components (districts) such that its candidate (party) wins as many districts as possible. Prior work has principally concerned the special cases where the graph is a path or a tree. Our focus concerns the realistic case where the graph is planar. We prove that the gerrymandering problem is solvable in polynomial time in $\lambda$-outerplanar graphs, when the number of candidates and $\lambda$ are constants and the vertex weights (voting weights) are polynomially bounded. In contrast, the problem is NP-complete in general planar graphs even with just two candidates. This motivates the study of approximation algorithms for gerrymandering planar graphs. However, when the number of candidates is large, we prove it is hard to distinguish between instances where the gerrymanderer cannot win a single district and instances where the gerrymanderer can win at least one district. This immediately implies that the redistricting problem is inapproximable in polynomial time in planar graphs, unless P=NP. This conclusion appears terminal for the design of good approximation algorithms -- but it is not. The inapproximability bound can be circumvented as it only applies when the maximum number of districts the gerrymanderer can win is extremely small, say one. Indeed, for a fixed number of candidates, our main result is that there is a constant factor approximation algorithm for redistricting unweighted planar graphs, provided the optimal value is a large enough constant. 
\end{abstract}
\keywords{Social Choice Theory, Redistricting, Gerrymandering, Approximation Algorithm, Planar Graph}
\newcommand{\tw}{tw}
\begin{document}


\pagestyle{fancy}
\fancyhead{}


\maketitle 
\section{Introduction}\label{sec:intro}
Partisan gerrymandering refers to the manipulation of district lines in order to give an advantage to one political party over others. This is facilitated by the hierarchical structure of political elections which allows for manipulation at the district or division level to ultimately secure more seats.
An effective strategy that divides the voting population can lead to a party or candidate winning a higher number of seats than they would otherwise have won. This process is commonly referred to as {\em gerrymandering}. The US House of Representatives elections provide a clear example of this practice, where candidates from different parties compete at the district level to represent their constituents in Congress. The problem of gerrymandering has recently gained considerable attention in the computer science community. Algorithms are used in practice to create districts 
(for example, many states within Mexico~\cite{trelles2014effects} use computer algorithms to district) 
and also to test for gerrymandering. The discussion of algorithmically finding optimal districts started in the ’60s~\cite{hess1965nonpartisan,altman1997computational}.
%
This induces the algorithmic question of whether it is possible to redistrict in such a way as to secure a greater number of seats for the favored political party (the {\em gerrymanderer}). 

A ubiquitous constraint in gerrymandering is that the counties that form a district be
contiguous. Thus gerrymandering is simply a graphical problem, where a vertex represents a county and an edge denotes the existence of a shared border 
between the two corresponding counties. The task then is to partition the graph into connected subgraphs, namely districts, such that one specific candidate (or party) will win as many districts as possible. 
This problem, called {\em gerrymandering over graphs}
was introduced by Cohen-Zemach, Lewenberg, and Rosenschein~\cite{ZemachLR18}.
The computational complexity of the gerrymandering problem has since received considerable attention in the literature. Clearly, planar graphs capture the setting of interest in practice, namely redistricting a geographical area. However, prior work on the gerrymandering problem has predominantly focused upon the special case of networks that are either paths or trees \cite{ZemachLR18,ItoKKO19,gupta2021gerrymandering,bentert2021complexity,SCD20,BKN22}.
But what can we say about real instances, specifically, what happens in maps? Answering this question motivates our study of gerrymandering on {\em planar} graphs. 


\subsection{The Gerrymandering Model}
An election with a set $\mathcal{C}$ of $c$ candidates is modeled by a graph $G=(V,E)$ where vertices represent counties and edges
indicate geographic adjacency.
The {\em gerrymandering} problem of Cohen-Zemach et al.~\cite{ZemachLR18} is then as follows.
There is a vertex weight function $wt \colon V(G) \rightarrow \mathbb{N}$
and an approval function $a:V(G) \rightarrow \mathcal{C}$, where $a(v)$ represents 
the candidate approved of by $v$.
Given a distinguished candidate $i\in \mathcal{C}$ and a connected subgraph
$T\subseteq V(G)$, we say that candidate \emph{$i$ wins district $T$} if
$\sum_{v\in T, \,a(v)=i} wt(v) > \sum_{v\in T, \,a(v)=j} wt(v)$, for any other candidate $j$. 
\begin{quote}
\textsc{The Gerrymandering Problem:} 
Given two positive integers $k$ and $w$, and a candidate $b$, is there a
partition of $V(G)$ into $k$ non-empty, connected districts $T_1\uplus \ldots \uplus T_k$ such that candidate $b$ wins at least $w$ districts?
\end{quote}
We will also consider a \emph{bounded district version} of the problem where we incorporate lower and upper bounds on the number of counties in any district (or lower and upper bounds on the total weight of the vertices of any district).
That is, $\ell \le |T_i| \le u$ for some $1\leq \ell \leq u \leq |V(G)|$ and each district $T_i$. 

A feasible solution is called a $(k,w)$-\textit{partition}.
The decision problem is to determine whether a $(k,w)$-partition
exists. The corresponding optimization problem is to find the largest
possible $w_{OPT}$ such that a $(k,w_{OPT})$-partition exists.

Alternately, we can represent each candidate by a color, with the gerrymanderer $b$ being denoted by ``blue". Each vertex is given the color of its preferred candidate.
Given a partition of the graph into $k$ connected components (districts),
blue wins a district if it has strictly the highest total weight of any color in the
district. The gerrymanderer desires a partition that maximizes 
the number of districts won by blue. We say a district is a 
\emph{winning district} if blue wins it.

\subsection{Our Contribution}
Our focus is upon the natural restriction that the graph $G$ is planar.
However, Ito et al.~\cite{ItoKKO19} showed the gerrymandering problem is 
NP-complete, even if the number of districts is $k=2$, the number of candidates is $c=2$, and $G$ is the bipartite graph $K_{2,n}$. Since $K_{2,n}$ is planar, the problem is hard for planar graphs. 
In this paper we present much stronger hardness results and initiate the study of approximation algorithms for the gerrymandering problem in planar graphs. Moreover, we provide exact algorithms for a slight restriction of planar graphs.

\textbf{Exact solutions.} Redistricting graphs is shown to be NP-hard even in restricted graph classes~\cite{bentert2021complexity,Fraser2022,gupta2021gerrymandering} and there have been a few algorithmic results, mainly for simple structures such as paths~\cite{gupta2021gerrymandering} and trees~\cite{bentert2021complexity,Fraser2022}. We push the boundaries of known complexity results. In Section~\ref{section:boundedTW}, we study $\lambda$-outerplanar graphs.
A graph $G$ is \emph{$\lambda$-outerplanar} if $G$ has a planar embedding such that the vertices belong to $\lambda$ layers.
Define $L_1$ to be the vertices incident to the outer-face, and define $L_i$ for $i>1$ recursively to be the vertices on the outer-face of the planar drawing obtained by removing the vertices in $L_1,L_2, \dots, L_{i-1}$. Each $L_i$ is called a layer.\footnote{Equivalently, iteratively deleting the outer layer (unbounded face) of the graph will produce an empty graph in $\lambda$ iterations.} We remark that $\lambda$-outerplanar graphs with a constant index $\lambda$ are important
as the graphs that arise in practical maps have this property.
We prove that, for constant~$\lambda$ and a constant number of candidates, there
is a polynomial time algorithm to solve the gerrymandering problem for $\lambda$-outerplanar graphs.
This positive result (Theorem~\ref{thm:outerplanar}) applies to the districting model in its full generality, including with lower and upper bounds on the district sizes, provided each vertex weight is polynomially bounded. Note that this result is tight; that is, unless P=NP, we cannot remove the dependency on the weight in the running time since it is NP-hard for paths with general weights~\cite{bentert2021complexity}.
To show Theorem~\ref{thm:outerplanar} we, in fact, prove a more general result (Theorem~\ref{thm:boundedTW}): the gerrymandering problem is 
solvable in polynomial time in graphs of bounded treewidth, given a constant number of
candidates and polynomial vertex weights. In particular, our result extends the approach of~\cite{ItoKKO19} on trees to graphs of bounded treewidth.

\textbf{Hardness of approximation.} In Section~\ref{sec:hardness} we explore the hardness of the gerrymandering problem in more detail.
First, we consider the case of two candidates and general graphs.
We show, via a reduction from the independent set problem, that the gerrymandering problem is inapproximable to within an $O(n^{\frac13-\epsilon})$ factor, for any constant $\epsilon>0$ (Theorem~\ref{thm:hard-general}). This result holds even for the {\em unweighted} case where each vertex has weight one.
We remark this result only implies NP-hardness for planar graphs and not inapproximability because there exists a polynomial-time approximation scheme (PTAS) for independent set in planar graphs \cite{Baker1994}.
Much stronger results arise with many candidates. Specifically, for planar graphs
it is hard to distinguish between instances where the gerrymanderer 
cannot win a single district and instances where the gerrymanderer can win 
at least one district (Theorem~\ref{thm:hard-planar}). This immediately implies that 
the gerrymandering problem is inapproximable in polynomial time in planar graphs, even in the unweighted case!

\textbf{Approximate solutions.} This hardness result suggests no approximation algorithm is possible for planar graphs.
But the situation is more subtle. It actually implies no approximation is possible if
the optimal solution $w_{OPT}$ is small, for example if $w_{OPT}=1$.
Remarkably, in Section~\ref{sec:constant-approx}, we show that for a constant number of candidates there is a constant factor
approximation algorithm in unweighted planar graphs when $w_{OPT}$ is a large enough constant.
Specifically, we present a quasi-linear time $O(c)$-approximation algorithm provided the optimal number of winning districts is a sufficient multiple of the number of 
candidates~$c$ (Theorem~\ref{thm:planarapproxlinear}). 

In Section~\ref{sec:ptas}, we study a problem called the {\em singleton winning district gerrymandering problem} where
the gerrymanderer aims to win only singleton districts. 
This combinatorially interesting but seemingly contrived problem is significant because its approximability
relates closely to the approximability of the generic gerrymandering problem.
Combining our algorithm for $\lambda$-outerplanar graphs with Baker's method~\cite{Baker1994} for planar graphs,  
we prove that there is a polynomial time approximation scheme (PTAS) for the 
unweighted singleton winning district gerrymandering problem in planar graphs (Theorem~\ref{thm:ptas}).
This raises an intriguing structural question that we leave open: does a PTAS also exist for the (non-singleton winning) gerrymandering problem, again assuming $w_{OPT}$ is large enough?

\subsection{Background and Related Work}
For many decades, the existence of gerrymandering and its consequences have been widely acknowledged and discussed in the realm of political science, as documented by Erikson~\cite{Erikson72}, Lubin~\cite{Lub99}, and Issacharoff~\cite{Issacharoff02}. However, the practical feasibility and broad implications of gerrymandering have only recently become the subject of intense public, policy, and legal debate~\cite{NYT-GM2}, largely due to the widespread use of computer modeling in the election process. 

Computational complexity studies of gerrymandering have taken several forms.  Puppe and Tasn{\'{a}}di~\cite{PUPPE200993} explored gerrymandering under certain constraints where specific groups of voters cannot be included in the partition and showed it is NP-complete. Fleiner et al.~\cite{FleinerNT17}, Lewenberg et al.~\cite{LewenbergLR17} and Eiben et al.~\cite{EFPS20}  investigated gerrymandering in presence of geographical constraints.

Most closely related to our work, there has been extensive recent study on
gerrymandering graphs. The model was originally introduced by Cohen-Zemach et al. \cite{ZemachLR18}. They showed it is NP-complete to decide if there is a $(k,w)$-partition of a weighted graph, in the case where districts of size $1$ are not allowed. They also designed a greedy algorithm for gerrymandering over graphs and analyzed its performance empirically through simulations on random graphs.
Ito et al. \cite{ItoKKO19} extended those results by considering a slightly different model, where the goal is to win more districts than any other candidate in a global election. We refer to this model as the \textit{global election model}. They proved that the problem is NP-complete for several graph classes: complete graphs, $K_{2,n}$, unweighted graphs with $c=4$, and trees of diameter four. They also provided polynomial-time algorithms for stars and pseudo-polynomial time algorithms for paths and trees when $c$ is constant.
Gupta et al. \cite{gupta2021gerrymandering} extended the model of gerrymandering to include \textit{vector weighted vertices}, where a vertex can have different weights for different candidates. They proved, in the global election model, that the problem of gerrymandering is NP-complete on paths. They also presented algorithms with running times of $2^k(n+c)^{O(1)}$ and $2^n(n+c)^{O(1)}$ for paths and general graphs, respectively.
Bentert et al. \cite{bentert2021complexity} considered the global election model with trees only and proved NP-hardness for the problem on paths, even in the unweighted case. They also provided a polynomial-time algorithm for trees when $c=2$ and proved weakly NP-hardness for trees when $c\geq 3$. They further presented a polynomial-time algorithm for trees of diameter exactly three.
Furthermore, Fraser et al. \cite{Fraser2022} considered the global election model on trees. They proved that the problem is W[2]-hard when parameterized by the number of districts~$k$, for trees of depth two. They also provided an algorithm with running time $O(n^{3l}\cdot 2.619^k(n + m)^{O(1)})$ for the vector weighted case on trees where $l$ is the number of leaves.
Graph theoretic formulation of districting problem has been used in~\cite{apollonio2009bicolored,borodin2018big} to study structural properties of the problem on grids.

Gerrymandering over graphs has also been studied from the complementary viewpoint of fairness as opposed to gerrymandering.
Stoica et al.~\cite{SCD20} studied gerrymandering graphs with an objective of creating fair connected districts: where the maximum margin of victory of any candidate is minimized, and showed NP-hardness for $k=2$ and $c=2$.  Boehmer et al.~\cite{BKN22} showed W[1]-hardness and an XP algorithm parameterized by treewidth, $k$ and $c$. We emphasize our algorithm for gerrymandering in graphs of bounded treewidth is independent of~$k$.


\section{A Polynomial Time Algorithm for Graphs with Bounded Treewidth}\label{section:boundedTW}
In this section, we introduce a dynamic program approach that efficiently computes an exact solution in polynomial time, provided that the treewidth of the graph and the number of candidates are constant. It is motivated by algorithm for trees of Ito et al.~\cite{ItoKKO19}. Here, the graph is weighted, that is, each vertex represents multiple votes for a candidate. Additionally, it allows for other constraints such as incorporating lower and upper bounds on district sizes.

\begin{restatable}{theorem}{boundedTW}\label{thm:boundedTW}
    There exists an algorithm computing the maximum number of winning district in time $  O(n^{2\tw+7}\cdot (\sum\limits_{v\in V} wt(v))^{2c\cdot\tw})$, where $\tw$ is the treewidth of the graph.
\end{restatable}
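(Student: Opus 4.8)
The plan is to design a bottom-up dynamic program over a tree decomposition, generalizing the rooted-tree dynamic program of Ito et al.~\cite{ItoKKO19} from trees to graphs of bounded treewidth. First I would compute a \emph{nice} tree decomposition of $G$ of width $\tw$, rooted at a node with empty bag and having $O(\tw\cdot n)$ nodes of the four standard types (leaf, introduce, forget, join); for $\lambda$-outerplanar graphs such a decomposition of constant width comes directly from the layered embedding, which is all Theorem~\ref{thm:outerplanar} needs. For a node $t$, write $B_t$ for its bag and $G_t$ for the subgraph induced by all vertices introduced in the subtree rooted at $t$, and set $W=\sum_{v\in V}wt(v)$. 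The key structural fact is that $B_t$ separates $V(G_t)\setminus B_t$ from the rest of $G$, so in any valid partition each district is either \emph{finished} (contained in $V(G_t)\setminus B_t$, hence connected) or \emph{active} (meeting $B_t$, and extendable outside $G_t$ only through $B_t$). Moreover one may restrict to partial partitions in which every active district is connected within $G_t$: this invariant is preserved by introduce nodes (a new vertex is attached only to an active district containing a neighbour of it), by join nodes (the two sides of a district share its entire, nonempty bag trace and so glue connectedly), and it makes forget nodes easy (a district losing its last bag vertex has only one component); consequently the usual partition-based connectivity machinery for treewidth dynamic programs is not needed. Accordingly I would index $\mathrm{DP}_t$ by a \emph{state} comprising: a partition of $B_t$ recording which bag vertices lie in a common district; for each block, the vector in $\{0,\dots,W\}^{c}$ of weighted votes accumulated for each candidate over that district's piece inside $G_t$; for the bounded-district variant, the block's current size (or total weight) truncated at the upper bound~$u$; and the number of districts created so far. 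The entry $\mathrm{DP}_t[\cdot]$ stores the maximum number of finished districts won by blue, over all partial partitions of $G_t$ realizing that state.

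The second step is to give the transitions and argue they compute this quantity. A leaf has the empty state with value~$0$. At an \emph{introduce} node adding $v$: since the only neighbours of $v$ already present lie in the child's bag, $v$ is assigned a district, either starting a fresh singleton or joining an existing active district that contains a neighbour of $v$, with the block's vote vector (and size) updated accordingly and the district count incremented when a new district is opened. At a \emph{forget} node removing $v$: $v$ is dropped from its block; if that block keeps another bag vertex nothing else changes, and otherwise the district is now finished — connected, by the invariant, and receiving no further vertices, by the separation property — so after checking the district-size bounds in the bounded variant we add~$1$ to the value exactly when blue's total in that block strictly beats every other candidate's. At a \emph{join} node with children $t_1,t_2$ sharing bag $B$: we consider every pair of children's states; since bag-vertex assignments are shared, the combined district partition of $B$ is the transitive closure of ``together in the $t_1$-state or in the $t_2$-state'', each resulting block receives the sum of the vote vectors (and sizes) of its contributing pieces from the two sides minus the contribution of the shared bag vertices (counted on both sides), the district counts are combined by the obvious formula, and the value is the sum of the two children's values, the finished districts on the two sides being disjoint. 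The answer is read off as $\mathrm{DP}_{\mathrm{root}}$ at the empty partition with district count equal to the prescribed $k$.

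Correctness follows by induction on the decomposition tree, the two delicate cases being the forget node (where the separation property and the connectivity invariant together show that a vanishing block is a complete connected district) and the join node (where one verifies that the transitive-closure merge together with the double-counting correction exactly describes the districts of the glued partition). For the running time, there are $O(\tw\cdot n)$ nodes, and per bag the number of states is at most a function of $\tw$ times $(W+1)^{O(c\cdot\tw)}$ from the vote vectors times $n^{O(\tw)}$ from the block sizes and the district count; a join transition ranges over pairs of children's states, so its cost is the square of this, and multiplying everything out — and noting that $W\le n\cdot\max_v wt(v)$ is polynomially bounded by hypothesis, so the bound is polynomial — yields, after a careful accounting, the stated running time $O\!\left(n^{2\tw+7}\cdot W^{2c\tw}\right)$. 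The main obstacle is the join node: the state must carry exactly enough information — district identity and per-district vote vectors, but not explicit connectivity certificates — for gluing two sub-solutions to be simultaneously correct and cheap, and it is here that the separation property and the fact that active districts stay connected do the real work.
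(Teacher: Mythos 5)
Your proposal follows essentially the same architecture as the paper's proof: a bottom-up dynamic program over a nice tree decomposition whose state records the partition of the bag into districts, a per-block vector of accumulated votes for each of the $c$ candidates, and the number of districts opened so far, with the answer read at the empty root bag and the running time dominated by the quadratically many state pairs at join nodes. Your bookkeeping (counting only \emph{finished} winning districts and finalizing a district's win status at the forget node) is an equivalent and arguably cleaner formulation than the paper's, and your transitive-closure merge at the join node is sound.

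There is, however, one genuine gap, at the introduce node. You allow the new vertex $v$ either to open a fresh singleton district or to join \emph{one} existing active district containing a neighbour of $v$; combined with your invariant that every active district is connected inside $G_t$, this makes the DP incomplete. A district of the final partition can have a \emph{disconnected} trace on $G_t$ whose pieces are joined only by a vertex introduced later: on the $4$-cycle with vertices $1,2,3,4$ and a path-like decomposition that introduces $4$ last, the district $\{1,3,4\}$ has trace $\{1,3\}$ on $G[\{1,2,3\}]$, which is disconnected there. Under your invariant it must be carried as two separate active districts $\{1\}$ and $\{3\}$, and when $4$ is introduced these two must be \emph{merged} through $4$ -- a transition your DP does not have, so it would undercount $w_{OPT}$. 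The fix, which is exactly what the paper does, is to let the introduce step range over all subsets $S\subseteq N(v)$ of bag-neighbours of $v$ and merge every active district meeting $S$ into a single district containing $v$, updating the district count, vote vectors, and win bookkeeping accordingly; this costs only a $2^{O(\tw)}$ factor per state and is absorbed by the stated bound. (Your instinct at the join node -- taking the transitive closure of the two bag partitions rather than requiring them to coincide -- is needed for the same reason, and is in fact a point where your write-up is more careful than the paper's, which insists on $P_L=P_R$.)
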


We defer the proof of Theorem~\ref{thm:boundedTW} and technical definitions required for the proof to Appendix~\ref{sec:missingDPproof}.
We remark that Theorem~\ref{thm:boundedTW} is tight in the sense that we cannot eliminate the dependency on $c$.
This is because the problem is NP-complete for paths (whose treewidth is one), a result due to Bentert et al.~\cite{bentert2021complexity}.\footnote{The result in \cite{bentert2021complexity} is for the global election model but the reduction applies for the model of~\cite{ZemachLR18} studied~here.} 
Moreover, it is impossible to omit the vertex-weight term and obtain an algorithm with running time  $n^{O(c\cdot\tw)}$,  unless P=NP, since the problem is weakly NP-hard even if $\tw=1$ (namely, trees) and $c=3$~\cite{bentert2021complexity}. 

We further remark that if the graph $G$ is $\lambda$-outerplanar, that is, $G$ has a planar embedding such that the vertices belong to $\lambda$ layers, then $G$ has treewidth at most $3\lambda-1$; see Bodlaender~\cite{Bodlaender1998}. 
Thus $\lambda$-outerplanar graphs have bounded treewidth if $\lambda$ is constant. 
So our dynamic program can be applied to the practical setting of $\lambda$-outerplanar graphs. Specifically, we obtain the following result as a corollary of Theorem~\ref{thm:boundedTW}.
\begin{theorem}\label{thm:outerplanar}
    In a $\lambda$-outerplanar graph $G$, there is an algorithm to compute the maximum number of winning districts in time $n^{O(c\lambda)}$ if the vertex weights are polynomially bounded.
\end{theorem}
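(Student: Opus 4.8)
The plan is to derive Theorem~\ref{thm:outerplanar} directly as a corollary of Theorem~\ref{thm:boundedTW}, using the standard fact that a $\lambda$-outerplanar graph has treewidth at most $3\lambda - 1$ (Bodlaender~\cite{Bodlaender1998}). The argument is therefore almost entirely bookkeeping: substitute $\tw \le 3\lambda - 1$ into the running time bound of Theorem~\ref{thm:boundedTW} and check that, under the hypothesis that vertex weights are polynomially bounded, every factor collapses into a term of the form $n^{O(c\lambda)}$.

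Concretely, I would proceed as follows. First, invoke Bodlaender's bound to get a tree decomposition of $G$ of width $\tw \le 3\lambda - 1$; such a decomposition can be computed in polynomial time (indeed linear time for fixed $\lambda$), so feeding it to the algorithm of Theorem~\ref{thm:boundedTW} is legitimate. Second, bound the two factors of the running time. The first factor is $n^{2\tw + 7} \le n^{6\lambda + 5} = n^{O(\lambda)}$. The second factor is $\bigl(\sum_{v \in V} wt(v)\bigr)^{2c\cdot\tw}$; here I use the polynomial-weight hypothesis, which says $wt(v) \le n^{d}$ for some constant $d$, so $\sum_{v \in V} wt(v) \le n \cdot n^{d} = n^{d+1}$, and hence this factor is at most $n^{(d+1)\cdot 2c\cdot\tw} \le n^{(d+1)\cdot 2c\cdot(3\lambda - 1)} = n^{O(c\lambda)}$. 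Third, multiply the two bounds: $n^{O(\lambda)} \cdot n^{O(c\lambda)} = n^{O(c\lambda)}$, which is the claimed running time (the additive $n^{O(\lambda)}$ from computing the decomposition is absorbed). This completes the proof.

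There is essentially no mathematical obstacle here; the only point requiring a moment's care is making sure the constant $d$ from "polynomially bounded weights" is treated as a fixed constant so that $(d+1)$ can legitimately be hidden inside the $O(\cdot)$ in the exponent — and, relatedly, that $c$ and $\lambda$ are the parameters we are allowed to place in the exponent (which is exactly what the statement of Theorem~\ref{thm:outerplanar} permits, since it only claims polynomiality for fixed $c$ and $\lambda$). If one instead wanted a running time genuinely polynomial in $n$ with $c, \lambda$ as free parameters, that would not follow, but the statement as given makes no such claim, so the corollary goes through cleanly.
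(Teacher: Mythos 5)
Your proposal is correct and matches the paper's own argument exactly: the paper also derives Theorem~\ref{thm:outerplanar} as a corollary of Theorem~\ref{thm:boundedTW} via Bodlaender's bound $\tw \le 3\lambda-1$ for $\lambda$-outerplanar graphs, with the polynomial-weight hypothesis collapsing the weight factor into $n^{O(c\lambda)}$. No further comment is needed.
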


\section{On the Hardness of the gerrymandering Problem}\label{sec:hardness}


It is known that gerrymandering is NP-hard for unweighted paths when there are many candidates \cite{bentert2021complexity}.
In this section, we present much stronger inapproximability results for unweighted (planar) graphs. 

\subsection{Lower Bounds for Elections with Two Candidates}
To begin, we consider the natural setting where there are two candidates (or parties) in the election. Here we show that the gerrymandering problem is at least as hard as the {\em maximum independent set problem}. Proofs for the next two results can be found in Appendix~\ref{sec:missinghardnessproof}.
\begin{restatable}{lemma}{hardness}\label{lem:hardness}
Given a graph $G$ on $n$ vertices and an integer $w$, there exists a graph~$H$ of size polynomial in $n$ and a coloration of $H$ using two colors 
such that there exists a $w$-sized independent set in $G$ {\em if and only if} there exists a $(n,w)$-partition in $H$.
\end{restatable}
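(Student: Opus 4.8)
The plan is a reduction from the maximum independent set problem. Given $G$ on $n$ vertices, I would build the $2$-colored graph $H$ as follows: introduce one blue vertex $x_v$ for every $v\in V(G)$, and replace each edge $e=uv\in E(G)$ by $n$ internally disjoint paths between $x_u$ and $x_v$, each having exactly two internal vertices, which are colored red. Thus $H$ is unweighted and $|V(H)|=n+2n|E(G)|=O(n^{3})$, so it has polynomial size. The candidate $(n,w)$-partition associated with a size-$w$ independent set $S$ will put each $x_v$ with $v\in S$ in its own (blue-won) district. We may assume $1\le w\le n$, the other cases being trivial.

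The first step is a structural observation: a connected district $D$ of $H$ that blue wins must be a single blue vertex $\{x_v\}$. To see this, let $x_{v_1},\dots,x_{v_j}$ be the blue vertices of $D$ and define an auxiliary graph $\Gamma$ on $\{v_1,\dots,v_j\}$ with $v_av_b\in E(\Gamma)$ iff $D$ contains a path from $x_{v_a}$ to $x_{v_b}$ all of whose internal vertices are red; decomposing any $D$-path at the blue vertices it meets shows $\Gamma$ is connected, so $|E(\Gamma)|\ge j-1$. Each red-only path leaving a blue vertex is forced (by the degree-$2$ red vertices) to read $x_{v_a},q,q',x_w$ and to stop there, so every edge of $\Gamma$ corresponds to a single $G$-edge and contributes its two distinct internal red vertices to $D$; hence $D$ has at least $2(j-1)$ red vertices. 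If blue wins $D$ then $j>2(j-1)$, forcing $j=1$ and then zero red vertices, i.e.\ $D=\{x_{v_1}\}$.

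For the forward direction, given an independent set $S$ with $|S|=w$: make $\{x_v\}$ a district for each $v\in S$ (these $w$ districts are blue-won) and delete those $w$ vertices. Because $S$ is independent, no edge of $G$ loses both endpoints, so every red path still hangs off a blue vertex of $V\setminus S$; hence the remaining graph has at most $|V\setminus S|=n-w$ connected components. Since any connected graph on $t$ vertices can be split into any number between $1$ and $t$ of connected parts, I can partition the remainder into exactly $n-w$ connected districts, yielding an $(n,w)$-partition.

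For the converse — the crux — suppose $H$ has an $(n,w)$-partition in which blue wins at least $w$ districts. By the structural observation these winning districts are singletons $\{x_v\}$, say for $v$ in a set $S$ with $|S|\ge w$; it remains to show $S$ is independent. If some edge $uv$ had both $u,v\in S$, then $x_u$ and $x_v$ are isolated in their own districts, so in each of the $n$ parallel paths between $x_u$ and $x_v$ the two internal red vertices have all of their neighbors outside themselves in other districts; each such pair must therefore be covered using only those two vertices, contributing at least $n$ further all-red districts, pairwise disjoint and disjoint from the $\ge w\ge 1$ blue singletons. This already exceeds the $n$ available districts, a contradiction, so $S$ is a size-$\ge w$ independent set. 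I expect the delicate point to be precisely this calibration of the edge gadget: it must simultaneously be ``long enough'' (two internal red vertices per path) that a district spanning two blue vertices can never be blue-won, and ``wide enough'' ($n$ parallel copies) that selecting both endpoints of a $G$-edge into the blue singletons overflows the district budget; the parallel length-three paths are chosen to meet both requirements at once.
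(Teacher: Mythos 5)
Your proposal is correct and follows essentially the same reduction as the paper: one blue vertex per vertex of $G$, a thick red gadget replacing each edge so that any blue-won district is forced to be a blue singleton, and the budget of $n$ districts used to rule out selecting both endpoints of an edge. The only difference is cosmetic gadget calibration (the paper uses $2n$ parallel red vertices each adjacent to both endpoints, i.e.\ length-two paths, rather than your $n$ length-three paths), and both counting arguments go through.
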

Using this link to the independent set problem~\cite{Hstad1999,Z2006} we can obtain a very strong inapproximability bound for the gerrymandering problem with two candidates in general~graphs. 
\begin{restatable}{theorem}{hardgeneral}\label{thm:hard-general}
In gerrymandering with two candidates, for any $\epsilon>0$, 
there is no polynomial-time algorithm with approximation guarantee less than $n^{1/3-\epsilon}$, unless P = NP.
\end{restatable}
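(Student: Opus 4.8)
The plan is to compose the reduction of Lemma~\ref{lem:hardness} with the near-optimal inapproximability of the maximum independent set problem, while keeping careful track of the polynomial blow-up in the number of vertices; the exponent $\tfrac13$ in the statement will be precisely the reciprocal of the growth exponent of that reduction. Recall~\cite{Hstad1999,Z2006} that, unless P $=$ NP, no polynomial-time algorithm approximates the independence number of an $N$-vertex graph to within a factor $N^{1-\delta}$, for any constant $\delta>0$.

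First I would extract from Lemma~\ref{lem:hardness} (and its proof in Appendix~\ref{sec:missinghardnessproof}) two facts that go slightly beyond the bare ``iff'': (i) the graph $H$ built from an $n$-vertex graph $G$ has $N:=|V(H)| = \Theta(n^{3})$ vertices; and (ii) the equivalence is value-preserving and constructive, i.e.\ the optimum $w_{OPT}$ of the gerrymandering instance $(H, k=n)$ with the prescribed two-coloring equals the independence number $\alpha(G)$, and from any $(n,w)$-partition of $H$ one can recover in polynomial time an independent set of $G$ of size $w$. Both should be routine to read off the construction.

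Now suppose, toward a contradiction, that for some constant $\epsilon\in(0,\tfrac13)$ there is a polynomial-time algorithm $\mathcal{A}$ that, on any $m$-vertex two-candidate gerrymandering instance, returns a partition with at least $w_{OPT}/m^{1/3-\epsilon}$ winning districts. Given $G$, construct $H$, run $\mathcal{A}$ on $(H,k=n)$, and translate its output into an independent set of $G$ via (ii); this set has size at least $w_{OPT}/N^{1/3-\epsilon} = \alpha(G)/N^{1/3-\epsilon}$. Since $N=\Theta(n^{3})$, we have $N^{1/3-\epsilon}=\Theta(n^{1-3\epsilon})$, which is at most $n^{1-2\epsilon}$ for all large $n$; hence we obtain an $n^{1-2\epsilon}$-approximation of $\alpha(G)$, contradicting~\cite{Hstad1999,Z2006}, so P $=$ NP. It suffices to treat $\epsilon<\tfrac13$, since for $\epsilon\ge\tfrac13$ the claimed guarantee is at most $1$, i.e.\ an exact algorithm, which is already ruled out (Lemma~\ref{lem:hardness} together with the NP-hardness of independent set).

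I expect the only real work to lie in steps (i)–(ii): confirming that the construction behind Lemma~\ref{lem:hardness} blows up the vertex count by at most a cubic factor (a larger blow-up would weaken the exponent — a quadratic construction would yield only $n^{1/2-\epsilon}$, a linear one would transfer the full $n^{1-\epsilon}$), and checking that the reduction transfers not merely feasibility at a threshold $w$ but the whole optimal value, so that an approximation of $w_{OPT}$ pulls back to an approximation of $\alpha$. A minor but necessary point is to invoke the derandomized form of the independent-set hardness (Zuckerman~\cite{Z2006}), so the conclusion is P $=$ NP rather than NP $\subseteq$ ZPP.
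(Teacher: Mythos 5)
Your proposal is correct and follows essentially the same route as the paper: compose the reduction of Lemma~\ref{lem:hardness} with the $N^{1-\delta}$ inapproximability of independent set~\cite{Hstad1999,Z2006}, using the bound $|V(H)| \le |V(G)| + 2|V(G)||E(G)| \le |V(G)|^3$ to account for the loss of a cube root in the exponent. The paper states this in one line; your version simply makes explicit the value-preservation and constructiveness of the reduction and the final arithmetic, all of which are indeed immediate from the proof of Lemma~\ref{lem:hardness}.
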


\subsection{Lower Bounds for Multi-Candidate Elections in Planar Graphs}
We remark that, for a planar graph $G$, the reduction in Lemma~\ref{lem:hardness} produces a planar auxiliary 
graph $H$. Thus, Lemma~\ref{lem:hardness} implies that, for two candidates the
gerrymandering problem is NP-complete in planar graphs.
But, as we shall now see, when there is a large number of candidates the situation is far worse:
the gerrymandering problem is inapproximable, even in planar graphs!
This is a consequence of the next result.

\begin{restatable}{theorem}{hardplanar}\label{thm:hard-planar}
For planar graphs, finding a partition with one winning district is NP-hard.
\end{restatable}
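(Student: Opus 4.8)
The plan is to reduce from a known NP-hard problem on planar graphs. A natural candidate is **planar 3-SAT** or, more conveniently here, a planar version of **exact cover / vertex cover / independent set** where membership of a single "good" district must encode a global consistency condition. The key obstruction we must engineer is this: blue should be able to win *one* district precisely when the underlying combinatorial instance is a YES-instance, and win *zero* districts otherwise. This is much more delicate than the two-candidate reduction of Lemma~\ref{lem:hardness}, because there we only needed a bijection between independent sets and partitions; here we need the partition to be *forced* to ``fail everywhere'' on a NO-instance.

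First I would pick a planar NP-hard source problem with a natural notion of a global certificate --- planar 3-SAT (with the variable-clause incidence graph drawn planar, including the variable cycle) is the most flexible. Then I would build a planar gadget graph $H$ as follows. Introduce one distinct candidate (color) for (almost) every vertex of $H$ except blue, so that the only way any color can \emph{tie or beat} blue in a district is for that district to contain at least as much weight of some single non-blue color as of blue; since we work unweighted, this means a winning blue district must contain strictly more blue vertices than the count of any other single color it meets, i.e., essentially a blue district must be (almost) monochromatically blue, or at least must avoid ``too much'' of any one rival. The trick is to use the many-candidates freedom to make ``winning a district'' extremely brittle: give each gadget a small pocket of blue vertices that can only be harvested into a winning district if the rest of the partition has been forced into a globally consistent configuration (truth assignment), and otherwise the connectivity constraints force some rival color into that pocket, killing the win.

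The key steps, in order: (1) fix the planar 3-SAT instance and its planar incidence drawing; (2) for each variable create a ``choice'' gadget --- a small planar cycle of non-blue vertices with two possible ways of being cut into districts, corresponding to $\mathrm{true}$/$\mathrm{false}$; (3) for each clause create a ``collector'' gadget containing a handful of blue vertices, wired planarly to its three literals, such that these blue vertices can be gathered into one connected blue-majority district iff at least one incident literal gadget was set to satisfy the clause (otherwise every connected region containing them also drags in enough of a single rival color to prevent a strict blue majority); (4) chain the clause gadgets so that the *single* permitted winning district is a connected union across all clauses, forcing all clauses to be simultaneously satisfied; (5) verify planarity is preserved throughout (all gadgets are local and attach along the incidence-graph edges, so the global drawing stays planar); (6) argue both directions --- a satisfying assignment yields the partition with its one blue district, and any partition with a winning district reads off a satisfying assignment.

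The main obstacle I expect is step~(3)/(4): designing the clause collector and the chaining so that (a) there is a *unique* candidate winning district shape, (b) it is blue-majority iff the clause is satisfied, and (c) everything remains planar and the vertex count stays polynomial, all simultaneously. Getting ``at most one winning district total, and it exists iff SAT'' is the crux --- a careless gadget would let blue steal an easy district somewhere locally regardless of satisfiability. Controlling this will likely require a careful accounting argument: partition the non-blue colors so that in *any* connected partition, every district other than the intended global one is forced to contain at least as many vertices of some single rival color as blue vertices, hence cannot be won by blue. Once that invariant is set up cleanly, the equivalence with planar 3-SAT follows, and since even deciding the existence of one winning district is then NP-hard, no polynomial-time approximation algorithm can exist for planar graphs unless P $=$ NP, matching the inapproximability claim stated after the theorem.
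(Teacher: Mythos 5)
Your plan has the right high-level instinct --- use a separate rival color per gadget so that a blue win is ``brittle,'' and force the \emph{single} winning district to encode a global certificate --- and this is indeed the same spirit as the paper's argument. But as written there is a genuine gap: the entire reduction lives in steps~(3) and~(4), which you explicitly leave unresolved. Two concrete difficulties are hiding there. First, the winning district must be \emph{connected}, so your ``connected union across all clauses'' has to snake through the planar incidence drawing, visiting every clause gadget, without being forced to absorb enough of some single rival color to lose; designing variable gadgets whose true/false ``cut'' simultaneously feeds several clauses while keeping this global blue region connected and planar is exactly the hard part, and nothing in the plan certifies it can be done. Second, you never use the budget of $k$ districts. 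In any such reduction the bound on the number of districts is what gives the counting argument its teeth: it is what forces large pendant structures attached to a gadget to be swallowed by whichever district contains the gadget (they cannot all become singletons), and hence what lets you conclude that a winning district containing one piece of a gadget must contain essentially all of it. Without specifying $k$ and the gadget multiplicities, the claim that ``every district other than the intended one is forced to contain at least as many vertices of some rival color as blue vertices'' is an invariant you have asserted, not established.

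The paper sidesteps both difficulties by reducing from \emph{minimum connected vertex cover} on planar graphs of maximum degree four rather than from planar 3-SAT. There, the connectivity of the winning district translates \emph{directly} into the connectivity of the cover (no chaining gadget is needed), and ``the winning district must meet every edge gadget'' is enforced by a counting argument: with $k=n$ districts, each edge gadget carries $nl(|E|+1)$ pendant blue vertices each with $|E|$ pendant vertices of a per-edge color $C^e$, so omitting even one edge gadget from the winning district leaves blue outnumbered by some $C^e$. A per-vertex cloud of $n|E|(|E|+1)$ red vertices then caps the number of cover vertices at $l$. If you want to salvage the planar 3-SAT route, you would need to supply analogues of both mechanisms: a pendant-multiplicity scheme tied to the district budget $k$, and an explicit planar chaining of clause gadgets whose connectivity does not interfere with the variable gadgets' two cut states. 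Until those are exhibited, the reduction is not complete.
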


\begin{proof}
    We construct a reduction from the \textit{minimum connected vertex cover problem}. Here, given a connected graph $G$, we seek a vertex cover $S\subseteq V(G)$ of minimum cardinality such that the subgraph induced by $S$ is connected. This problem 
    is NP-hard even in planar graphs of maximum degree four~\cite{GareyJohnson1977}. 
    Now, given an integer $l$, we build an auxiliary graph $H$ from $G$ such that there is a size $l$ connected vertex cover in $G$ {\em if and only if} there is an $(n,1)$-partition of $H$. 
    The construction of $H$, shown in Figure~\ref{fig:hard-planar}, follows.  
    For every vertex $v \in V(G)$:  
        \begin{itemize}
            \item There is a white vertex $w^v\in V(H)$.
            \item There are $n|E|\cdot (|E|+1)$ red vertices $r^v_i\in V(H)$, for $ i = 1,\dots,n|E|\cdot(|E|+1)$. Each such red vertex has an edge in $E(H)$ to $w^v$.
        \end{itemize}
   For every edge $e=(u,v) \in E(G)$:
   \begin{itemize}
            \item There is a blue vertex $b^e\in V(H)$ with edges to both $w^u$ and $w^v$.
            \item There are $nl\cdot(|E|+1)$ blue vertices $b'^e_i\in V(H)$, for $ i = 1,\dots,nl\cdot(|E|+1)$. Each such blue vertex has an edge in $E(H)$ to $b^e$.
            \item For all $b'^e_i$, $i\in \{ 1,\dots,nl(|E|+1)\}$, there are $|E|$ vertices $c^e_{i,j}$, for $ j = 1, \dots |E|$, each with an edge to $b'^e_i$. All vertices of the form $c^e_{i,j}$ are all colored with the same new color $C^e$.
        \end{itemize}
\begin{figure}[h]
\centerline{\includegraphics[width = 8cm]{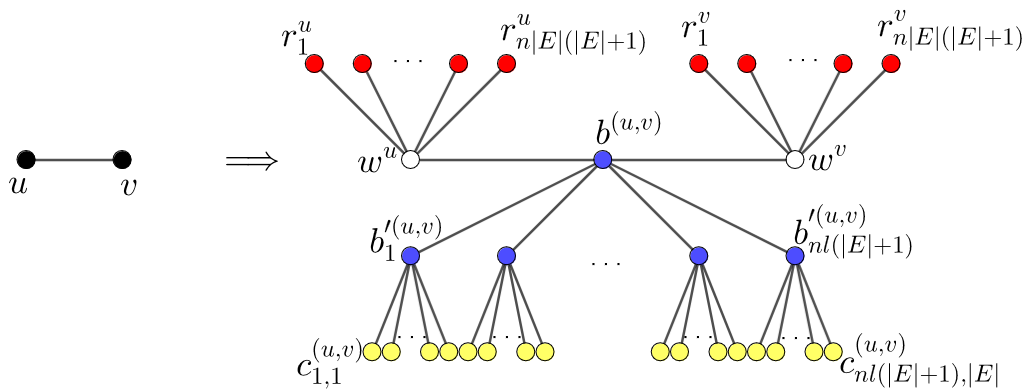}}
\caption{Construction of $H$ in the proof of Theorem~\ref{thm:hard-planar}}\label{fig:hard-planar}
\end{figure}

   Observe that the transformation from $G$ to $H$ preserves planarity. Consider the gerrymandering problem on $H$ with $n$ districts.
   First, we show that if there is a connected vertex cover $S$ in $G$ of size at most $l$ then there is a decomposition of $H$ with one winning district $W$. For all $v \notin S$, we create a district $D_v$ with the vertex $w^v$ and all the vertices of the form $r^v_i$ for $i \in \{1,\dots,n|E|\cdot (|E|+1)\}$. All the remaining vertices are placed in 
   the district $W$. Observe that $W$ is connected because $S$ is a connected vertex cover. In $W$ there are: 
   \begin{itemize}
    \item $|E|\cdot nl\cdot(|E|+1) + |E|$ blue vertices;
    \item $|S|$ white vertices;
    \item $|S| \cdot n|E|\cdot(|E|+1)$ red vertices;
    \item $|E| \cdot nl\cdot(|E|+1)$ vertices of color $C^e$, for each $e \in E$. 
\end{itemize}
So $W$ is winning because it contains more blue than red vertices, as $|S| \leq l$.

For the other direction, assume that $W$ is a winning district in $H$. Define $E^W$ to be the set of 
edges $e\in E(G)$ such that $b^e\in W$ and define $V^W$ to be the set of vertices $v\in V$ such that $w^v\in W$. We claim $V^W$ is a connected vertex cover of cardinality at most $l$ in $G$.
To show this, observe that $V^W$ is connected in $G$ because $W$ is connected in $H$. Next, to show that $V^W$ is a vertex cover, it suffices to prove that $E^W = E$. Suppose, for the sake of contradiction, that $|E^W| < |E|$.  Recall, there are only $n$ districts in total and each district must be connected. So if $b^{(u,v)}\in W$ for some $(u,v)\in E(G)$, then 
at most $n$ vertices of the form $b'^{(u,v)}_i$ are not in $W$ for $i\in \{ 1,\dots,nl(|E|+1)\}$. Thus, at least $(nl\cdot (|E|+1)-n)$ of the vertices $b'^{(u,v)}_i$ are in $W$. By a similar argument, if $b'^{(u,v)}_i \in W$, then at most $n$ vertices of the form $c^e_{i,j}$ are not in $W$, and $(nl(|E|+1)-n)|E|-n$ vertices of $W$ are of the form $c^e_{i,j}$ for $j \in \{1, \dots, |E|\}$. Let $n_{C^e}$ be the number of vertices of $W$ of color $C^e$. Thus, $n_{C^e} \geq (nl(|E|+1)-n)\cdot |E|-n$. Let $n_{blue}$ be the number of blue vertices in $W$. Then:
\begin{align*}
n_{blue} 
&\ \leq\ nl(|E|+1)\cdot |E^W| + |E^W| 
\\
&\leq\ \quad nl(|E|+1)\cdot (|E|-1) + (|E|-1)\\
&\ =\  nl(|E|+1)\cdot |E| - nl(|E|+1) + |E|-1\\
&<\  nl(|E|+1)\cdot |E|-n(|E|+1)\\
&\ \leq\  n_{C^e}
\end{align*}
The strict inequality in the last but one step follows from the assumption that $l\geq 2$.
Therefore, there are more vertices of color $C^e$ than color blue in $W$. This contradicts the assumption that $W$ is a winning district for blue. It follows that $E^W = E$ and $V^W$ is a vertex cover.

It remains to prove that $|V^W| \leq l$. Suppose, for a contradiction, that $|V^W| \geq l+1$. Then, the number $n_{red}$ of red vertices in $W$ is at least $n|E|(|E|+1)\cdot (l+1)-n$. On the other hand, there are in total $nl|E|\cdot(|E|+1)$ blue vertices in $H$. Thus, we have $n_{red} \geq n|E|(|E|+1)(l+1)-n > nl|E|\cdot(|E|+1) \geq n_{blue}$, a contradiction. Therefore, $V^W$ is a connected vertex cover of size at most $l$.
 \end{proof}

\begin{corollary}\label{cor:hard-planar}
Gerrymandering is inapproximable in planar graphs, unless $P=NP$.
\end{corollary}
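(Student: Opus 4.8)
The plan is to obtain the corollary directly from Theorem~\ref{thm:hard-planar} by a standard gap argument. That theorem shows it is NP-hard to decide, on a planar instance, whether the gerrymanderer can win at least one district; equivalently, it is NP-hard to distinguish instances with $w_{OPT}=0$ from instances with $w_{OPT}\ge 1$. The observation driving the proof is that no multiplicative approximation guarantee can bridge a gap whose lower endpoint is zero.

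Concretely, recall that an $\alpha$-approximation algorithm for this maximization problem must, on every instance, output a feasible $(k,w)$-partition with $w \ge w_{OPT}/\alpha$. Suppose for contradiction that such a polynomial-time algorithm exists for some approximation factor $\alpha$ (we allow $\alpha$ to be any function of $n$, not just a constant). Run it on a planar instance produced by the reduction in the proof of Theorem~\ref{thm:hard-planar}. If $w_{OPT}=0$, the algorithm necessarily returns a partition with $0$ winning districts. If $w_{OPT}\ge 1$, the algorithm returns a partition with at least $1/\alpha>0$ winning districts; since the number of winning districts is a non-negative integer, it therefore returns a partition with at least one winning district. Hence, simply by counting the winning districts in the algorithm's output, we decide in polynomial time whether $w_{OPT}\ge 1$, contradicting the NP-hardness established in Theorem~\ref{thm:hard-planar} unless P$=$NP.

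There is essentially no technical obstacle: the entire content is packaged into Theorem~\ref{thm:hard-planar}, and the remaining step is the routine translation of a $\{0,1\}$ promise gap into inapproximability within any factor. The only point worth emphasizing is the scope of the conclusion -- the argument rules out \emph{any} finite approximation ratio, and it is robust across models (unweighted or weighted, with or without district-size bounds), since the reduction underlying Theorem~\ref{thm:hard-planar} already uses unit weights and a fixed number $k=n$ of districts. This also explains why the positive results of the paper must assume $w_{OPT}$ is sufficiently large: the inapproximability bites precisely, and only, when the optimum is tiny.
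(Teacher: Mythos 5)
Your proposal is correct and is essentially identical to the paper's own proof: both derive the corollary from Theorem~\ref{thm:hard-planar} by observing that any algorithm with a finite approximation guarantee must return at least one winning district whenever $w_{OPT}\ge 1$, and could therefore decide the NP-hard distinguishing problem in polynomial time. No substantive difference in approach.
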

\begin{proof}
Any approximation algorithm with a finite approximation guarantee must find at least one
winning district if the maximum number of winning districts is strictly positive.
By Theorem~\ref{thm:hard-planar}, this algorithm can then be used to distinguish between
yes and no instances of the connected vertex cover problem in polynomial time.
 \end{proof}

\section{A Constant Factor Approximation for Planar Graphs}\label{sec:constant-approx}

Given our hardness bounds apply even in the setting of unweighted graphs,
it is natural to instigate the study of approximation algorithms for planar graphs in the 
basic case of unweighted graphs.
(We remark that unweighted graphs have been studied in their own right in~\cite{bentert2021complexity,ItoKKO19}).
At first glance, Corollary~\ref{cor:hard-planar} appears fatal to this endeavour: apparently, 
no approximation algorithms exist. However, this conclusion arises only because of the difficultly in 
distinguishing between cases where no districts can be won and cases where at least one district 
can be won. But this implies a large inapproximability bound only applies when the optimal 
number of winning districts $w_{OPT}$ is small.
Indeed our main result is the following.
\begin{theorem}\label{thm:planarapproxlinear}There exists an algorithm with running time $O(n \log{n})$ that computes a $(k,\lfloor w_{OPT}/O(c)\rfloor)$-partition in planar graphs.
\end{theorem}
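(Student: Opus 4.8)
The plan is to construct the required $k$-partition explicitly from a single spanning tree, carving out many small blue-winning pieces while keeping careful track of the district count, and then to prove the construction wins $\Omega(w_{OPT}/c)$ districts. First I would reduce to connected $G$: if $G=G_1\uplus\cdots\uplus G_t$, any $(k,w)$-partition restricts to a partition of each component, so it suffices to distribute the budget $k$ among the components (each needs at least one district, and a component only gains winning districts as it is given more) and add the per-component answers; this allocation is a trivial monotone one-dimensional choice, computable in $O(n\log n)$ total. For connected $G$ I would record the easy upper bounds: each winning district is nonempty, pairwise disjoint, and contains at least one blue vertex, so $w_{OPT}\le B$ (the number of blue vertices), while trivially $w_{OPT}\le k$; moreover any winning district of size $s$ contains strictly more than $s/c$ blue vertices.

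For the algorithm itself, compute a spanning tree $T$ of $G$ in $O(n)$ time (planarity gives $|E|=O(n)$), root it at an arbitrary $r$, and sweep $T$ bottom up, maintaining for the currently unassigned part of each processed subtree the vector of colour counts it contains. Whenever the unassigned region rooted at a vertex $v$ first becomes blue-winning, carve off a smallest blue-winning peripheral piece available there (in the easiest case a single blue vertex that is a leaf of the current tree; otherwise the whole subtree $T_v$, which is still a peripheral excision), so that the complement remains connected; stop once $k-1$ pieces have been carved. Let $m$ be the number carved and $R$ the connected remainder; by only carving small pieces and watching a simple counter one guarantees $|R|\ge k-m$, so the partition is finished by peeling a spanning tree of $R$ into $k-m$ connected districts (or, if carving exhausts $G$ with exactly $k$ pieces, output that). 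This is one $O(c)$-per-vertex sweep plus the component bookkeeping, so $O(cn+n\log n)=O(n\log n)$, and it outputs a valid $(k,m)$-partition.

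It remains to show $m\ge\lfloor w_{OPT}/O(c)\rfloor$, which is the heart of the matter. I would fix an optimal solution with winning districts $W_1,\dots,W_{w_{OPT}}$ and run a charging argument: follow a blue vertex of each $W_j$ upward in $T$ to the first carved winning subtree it lands inside (or that the sweep would have carved), charging $W_j$ to that subtree. A carved subtree $S$ that arises as a smallest blue-winning piece has $|S|=O(c)$, and distinct $W_j$'s charged to it contribute distinct blue vertices of $S$, so only $O(c)$ of them can be charged to any one $S$; hence $m\ge w_{OPT}/O(c)$, provided the sweep never halts ``early'' — i.e.\ provided honouring the ``$|R|\ge k-m$'' budget does not force it to stop while carvable regions remain. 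This last point is exactly where planarity must be used: because $w_{OPT}$ disjoint blue-winning districts fit in $G$, the graph is blue-rich enough that the excisions realising $\Omega(w_{OPT}/c)$ small peripheral winning subtrees inflate the component count of the complement by only $O(w_{OPT}/c)$ — a bound that genuinely fails for general graphs (consistent with the $n^{1/3}$-inapproximability of Theorem~\ref{thm:hard-general}) but holds for planar $G$ via its sparsity/separator structure, leaving the budget satisfied with room to spare.

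I expect this final paragraph to be the main obstacle: reconciling the rigid requirement of exactly $k$ districts with the need to expose $\Omega(w_{OPT}/c)$ winning pieces, and pinning down quantitatively how planarity prevents the complement from fragmenting past the budget (including getting the exact carving rule to make the charging watertight). By contrast, the reduction to connected graphs, the two upper bounds on $w_{OPT}$, the spanning-tree sweep, and the $O(n\log n)$ running time analysis are routine.
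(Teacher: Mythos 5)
There is a genuine gap, and it sits exactly where you flagged it: you have no quantitative mechanism for bounding how much the complement fragments when a winning piece is carved, and ``planarity via its sparsity/separator structure'' does not supply one. Planarity gives average degree $<6$ over \emph{all} vertices, but says nothing about the degrees of the \emph{blue} vertices: a planar graph can consist of a few blue vertices each surrounded by a large number of red neighbours, in which case excising any blue-winning piece shatters the remainder into many components, and every red leaf or low-degree red vertex stranded by the excision becomes a forced losing district that consumes one of the $k$ slots. Your charging argument has the same hole from the other side: winning districts $W_j$ whose blue vertices never land in a carved piece (because the budget halted carving in that region) are never charged, the parenthetical ``or that the sweep would have carved'' is doing unexplained work, and the claim that a minimal blue-winning carved subtree has size $O(c)$ is not established for subtrees of the form $T_v$.

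The paper's proof is structured precisely to manufacture a graph in which low-degree blue vertices are plentiful. It first reduces to \emph{singleton} winning districts at a cost of $2c+2$ (Lemma~\ref{lem:singletonsreduction}: contract losing districts, take a spanning tree of the contracted graph, and use $\sum_v deg_T(v) < 2|\widetilde V| \le 2ck|B_W|/w_{OPT}$ plus Markov's inequality to find many blue vertices of tree-degree $O(k/w_{OPT})$, then a maximality argument against the budget $k$). This is the only place a spanning tree appears, and it is applied to the \emph{optimal} solution, not used as the algorithm's data structure. The algorithmic half (Lemma~\ref{lem:planarapproxsingletons}) then works on the two-colour unweighted instance and needs three further ideas you do not have: a pruning step that contracts red components and recolours blue vertices incident to more than $12k/w_{OPT}$ red leaves; a cut-and-connect step replacing degree-2 and degree-3 red vertices by edges and triangles, after which a bipartite-planar degree count yields $|B_2|\ge |V_2|/(2+12k/w_{OPT})$ --- this is the inequality that makes ``average degree of blue vertices is $O(1+k/w_{OPT})$'' true and hence bounds fragmentation per carved vertex; and a final 5-colouring step selecting an independent set of winning singletons so that the deleted red vertices can be reattached to losing districts without flipping a winner. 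The thresholds depend on $k/w_{OPT}$, so the algorithm must also guess $w_{OPT}$ among powers of two, which is where the $\log n$ factor comes from. None of these components is routine, and without substitutes for them your sweep cannot be made to respect the budget of exactly $k$ connected districts.
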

We remark that the floor function in Theorem~\ref{thm:planarapproxlinear} is necessary
in view of Corollary~\ref{cor:hard-planar}.
An important consequence of the theorem is that there is an $O(c)$-approximation algorithm
for gerrymandering in unweighted planar graphs, provided the optimal number of winning districts 
$w_{OPT}$ is a sufficiently large multiple of $c$. 
So for a constant number of candidates,
there is a constant approximation algorithm given the optimal value is large enough.

We will now prove Theorem~\ref{thm:planarapproxlinear}. Towards this aim, we first reduce our problem to the case where winning districts are singletons:
given a $(k, w_{OPT})$-partition there
is a feasible partition in which blue wins a large number of \textit{singleton districts}.
Specifically, the next lemma implies that insisting all winning districts are \textit{singleton} blue vertices will only incur a constant factor loss. 

\begin{restatable}{lemma}{sreduction}\label{lem:singletonsreduction}
    There is a $(k,\lfloor w_{OPT}/(2c+2) \rfloor)$-partition with singleton winning districts.
\end{restatable}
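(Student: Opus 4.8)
The plan is to start from a given $(k,w_{OPT})$-partition and surgically convert a $\tfrac{1}{2c+2}$ fraction of the winning districts into singleton blue districts, keeping the total number of districts equal to $k$; the loss of the $(2c+2)$ factor is precisely the price of the surgery. Two elementary observations drive everything. First, a district consisting of a single blue vertex is automatically winning for blue, so it suffices to exhibit enough blue vertices that can simultaneously be carved out as their own districts. Second, since the instance is unweighted and blue has a strict plurality in each winning district $T_i$, summing the inequalities $|T_i\cap B|\ge |T_i\cap (\text{color }j)|+1$ over the $c-1$ non-blue colors (where $B$ is the set of blue vertices) gives $|T_i\cap B|\ge (|T_i|+c-1)/c$; in particular every winning district contains at least one blue vertex, and more generally blue occupies a $\tfrac1c$ fraction of each winning district.

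First I would cut down the collection of winning districts by a planarity argument. Contracting each of the $w=w_{OPT}$ winning districts $T_1,\dots,T_w$ to a single vertex yields a planar minor $\mathcal A$ of $G$ on $w$ vertices (the ``winning-district adjacency graph''). A planar graph on $w$ vertices has an independent set of size at least $w/4$, so we may pick a set $I$ of indices with $|I|=\lfloor w/(2c+2)\rfloor$ such that the districts $\{T_i: i\in I\}$ are pairwise non-adjacent in $G$. For each $i\in I$ I would choose one blue vertex $v_i\in T_i$, declare $\{v_i\}$ a (winning) singleton district, and redistribute the rest of $T_i$: every connected component of $G[T_i]-v_i$ that has a neighbour outside $T_i$ is absorbed into an adjacent losing district, and the leftover of $G$ is finally re-balanced into exactly $k-|I|$ connected districts by repeatedly merging adjacent districts among the non-singletons (when the count is too high) or splitting a spanning-tree edge of a non-singleton district (when it is too low). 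Because the selected $T_i$ are pairwise non-adjacent and $v_i\in T_i$, the chosen singletons are pairwise non-adjacent and the surgeries on different $i$ do not interfere, so the final partition has at least $|I|=\lfloor w_{OPT}/(2c+2)\rfloor$ singleton winning districts.

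The main obstacle, and the source of the second factor, is the district-count bookkeeping: removing $v_i$ can shatter $G[T_i]-v_i$ into several components, and any component whose whole neighbourhood lies inside $T_i\cup\{v_i\}$ is a ``private fragment'' that cannot be reattached externally and is forced to become its own district. I would argue that one can always choose $v_i$ so that isolating it splits $T_i$ into at most $c+1$ pieces: if $G[T_i]$ has a blue non-cut vertex, pick it and $T_i\setminus v_i$ stays connected; otherwise every blue vertex is a cut vertex, which pins down a ``caterpillar-like'' structure for $T_i$, and since by the plurality bound blue already fills a $\tfrac1c$ fraction of $T_i$ the number of non-blue branches one must sever is bounded by a function of $c$, giving the $c+1$ bound. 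Consequently, after the surgery the number of districts is at most $k-|I|+(c+1)|I|$ and at least $w\ge |I|$, and since $(c+1)|I|\le (c+1)\,w/(2c+2)=w/2\le k-1$ the re-balancing step can indeed reach exactly $k$ without disturbing the chosen singletons; tracking the $\tfrac14$ from the planar independent set and the $c+1$ fragmentation bound yields the stated constant $2c+2$. The delicate points to verify carefully are the $c+1$ fragmentation claim for a single winning district and the fact that all merges in the re-balancing step can avoid the chosen singleton districts.
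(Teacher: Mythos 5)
Your overall strategy (carve one blue vertex out of each of a well-chosen subcollection of winning districts) is in the right spirit, but there is a genuine gap in the district-count bookkeeping, which is exactly the delicate part of this lemma. When you isolate $v_i$, every component of $G[T_i]-v_i$ with no neighbour outside $T_i\cup\{v_i\}$ (your ``private fragment'') is adjacent \emph{only} to the singleton $\{v_i\}$ in the resulting district structure; hence it can never be merged into any district other than $\{v_i\}$ itself. So if the surgeries push the total number of districts above $k$, your re-balancing step cannot bring the count back down without destroying the chosen singletons, and the inequality you invoke, $(c+1)|I|\le w/2\le k-1$, only bounds the \emph{excess} over $k$; it does not show the excess can be eliminated. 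The paper's proof resolves precisely this issue by not fixing the carved set in advance: it takes a \emph{maximal} subset $B^*$ of low-degree blue vertices such that deleting $B^*$ leaves at most $k-|B^*|$ components, and then uses maximality together with the per-vertex component bound to derive $|B^*|\ge\lfloor w_{OPT}/(2c+2)\rfloor$. Some version of this maximality argument (or another mechanism guaranteeing the count never exceeds $k$) is indispensable.

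A second, smaller gap is the claim that one can always pick $v_i\in T_i$ whose removal splits $T_i$ into at most $c+1$ pieces. The ``caterpillar-like structure'' sketch is not a proof, and the natural argument (take a spanning tree of $T_i$; blue occupies at least a $1/c$ fraction of its vertices, so some blue vertex has tree-degree, hence fragmentation, at most $2c-1$) yields $2c-1$, not $c+1$. The paper instead bounds degrees globally rather than district by district: it contracts the \emph{losing} districts, takes a spanning tree of the whole contracted graph, shows the average tree-degree of the blue vertices in winning districts is below $2ck/w_{OPT}$, and applies Markov's inequality; the ratio $k/w_{OPT}$, which is absent from your analysis, is what makes the constants come out to $2c+2$. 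Your planar-independent-set step is not used in the paper's argument and by itself neither causes nor repairs the problems above.
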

\begin{proof}
    Let $G$ be a planar graph and consider its optimal $(k,w_{OPT})$-partition. Let $W$ be the index set of the districts won by the blue candidate and let $\{D_i: i\in W\}$ be the corresponding set of winning districts in the $(k,w_{OPT})$-partition. Let $B_i\subseteq D_i$ and $B_W = \bigcup_{i\in W} B_i$ be the set of blue vertices in the winning districts. For each $i\in W$ we have $|B_i| \ge |D_i|/c$ because blue is the most preferred of the candidates. Hence
    $\frac{1}{c}\cdot \sum_{i\in W}|D_i| \le  \sum_{i\in W}|B_i|= |B_W|$.
We now create an auxiliary planar graph $\widetilde{G}$ as follows.
For each losing district, contract the vertices in that district into a single vertex. This can be done as, by definition, each district induces a connected subgraph. Without loss of generality, we may color the resultant losing singleton vertices red. Since $\sum_{i\in W}|D_i| \le c\cdot |B_W|$, we have
    \begin{equation}\label{ineq2}
        \frac{w_{OPT}}{k}
        \ =\ \frac{\sum_{i\in W} 1}{|\widetilde{V}|-\sum_{i\in W} (|D_i|-1)} 
        \ \le\ \frac{\sum_{i\in W}|D_i|}{|\widetilde{V}|} 
        \ \le\ \frac{c\cdot |B_W|}{|\widetilde{V}|}
    \end{equation}
As $\Tilde{G}$ is connected it contains a spanning tree $T$.
Let $deg_{T}(v)$ denote the degree of vertex $v$ in the spanning tree $T$.
It holds that:
    \begin{equation*}
        \sum_{v\in B_W} deg_{T}(v) 
        \ \le\ \sum_{v\in \widetilde{V}} deg_{T}(v)
        \ <\ 2 |\widetilde{V}|
        \ \le\ \frac{2ck\cdot |B_W|}{w_{OPT}}
    \end{equation*}
Here the final inequality follows from~(\ref{ineq2}). Consequently,
$\frac{1}{|B_W|}\cdot\sum_{v\in B_W} deg_{T}(v) \ <\ \frac{2ck}{w_{OPT}}$.
Thus, the average degree in the tree $T$ of the blue vertices in the winning district is less than $\frac{2ck}{w_{OPT}}$. Define $B'_W = \{v\in B_W: deg_{T}(v)\le (1+\frac{1}{2c})\cdot\frac{2ck}{w_{OPT}}\}$. By Markov's inequality, we have $|B'_W|\ge \frac{1+\frac{1}{2c}-1}{1+\frac{1}{2c}}\cdot |B_W| = \frac{|B_W|}{2c+1}$. 
Now take a maximal set $B^*\subseteq B'_W$ such that $G\setminus B^*$ contains at most $k-|B^*|$ components. If $B^*=B'_W$ then selecting $B^*$ as
the winning singleton districts gives a factor $2c+1$ approximation guarantee.

\begin{sloppypar}
    Otherwise observe that removing any vertex $v\in B^*$ creates at most 
$\left\lfloor(1+\frac{1}{2c})\cdot\frac{2ck}{w_{OPT}}\right\rfloor = \left\lfloor\frac{k(2c+1)}{w_{OPT}}\right\rfloor$ new components (plus the singleton district $v$ itself). Therefore, by the maximality of $B^*$, we have
$(|B^*|+1)\cdot \left(\left\lfloor\frac{k(2c+1)}{w_{OPT}}\right\rfloor +1\right) \ > \ k$.
Rearranging, we have
\end{sloppypar}
 \begin{equation*}
    |B^*|+1 \ >\ \frac{k}{\left\lfloor\frac{k(2c+1)}{w_{OPT}}\right\rfloor +1}
   \ \ge \ \frac{k}{\frac{k(2c+1)}{w_{OPT}} +1}
    \ = \ \frac{w_{OPT}}{2c+1 +\frac{w_{OPT}}{k}}
    \end{equation*}
   Because $|B^*|+1$ is integral, the strict inequality yields
$|B^*|\ \ge \ \left\lfloor\frac{w_{OPT}}{ 2c+1 +\frac{w_{OPT}}{k}}\right\rfloor$.
So the vertices of $|B^*|$ form the winning districts in a $(\hat{k},\lfloor w_{OPT}/( 2c+2) \rfloor)$-partition with $\hat{k} \leq k$. We can convert the partition obtained from the previous algorithm into a $(k, \lfloor w_{OPT}/(2c+2) \rfloor)$-partition by splitting the losing districts. For each losing district that is not a singleton, we can construct a spanning tree on the districts and remove the leaves of the tree one by one. Each time we remove a leaf, we create a new losing district. We repeat this process until we have obtained $k$ districts in total. This procedure is feasible because $k \leq n$, and thus, we are guaranteed to obtain $k$ districts before all the losing districts become singletons.
 \end{proof}

Theorem~\ref{thm:planarapproxlinear} will follow immediately from Lemma~\ref{lem:singletonsreduction} and the next lemma.
\begin{lemma}\label{lem:planarapproxsingletons}
    In planar graphs, there is an algorithm with running time $O(n \log{n})$ that computes a $(k,\lfloor w_{OPT}/845)\rfloor)$-partition where the winning districts are singletons.
\end{lemma}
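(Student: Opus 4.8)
The plan is to reduce the task of finding many singleton winning districts in a planar graph to a purely combinatorial selection problem: we want a large set $S$ of blue vertices that are pairwise non-adjacent (so each can be carved off as its own singleton district) and such that the remainder $G \setminus S$ does not shatter into too many pieces — concretely, into at most $k - |S|$ components — so that the leftover can be reassembled into exactly $k - |S|$ connected losing districts. Let me write $w^* = w_{OPT}$ for the true optimum. First I would invoke Lemma~\ref{lem:singletonsreduction} to pass to an optimum-with-singletons value $w' \ge \lfloor w^*/(2c+2)\rfloor$; actually, since the constant $845$ is being claimed independently of $c$, the cleaner route is probably to argue directly at the level of singleton winning districts without re-paying the $c$ factor — i.e.\ work with the optimum singleton solution and show we lose only an absolute constant. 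So the real content is: given that some $(k, w')$-partition with $w'$ singleton winning districts exists, produce in $O(n\log n)$ time a $(k, \lfloor w'/\Theta(1)\rfloor)$-partition with singleton winning districts.

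The key structural step I would carry out is a planarity-based argument that controls simultaneously (i) the independence we need among chosen singletons and (ii) the number of new components their removal creates. Here the natural tool is that planar graphs have bounded-degeneracy / small separators. I would take the set of blue vertices that are "eligible" — those that, in the optimal solution's spanning structure, have small tree-degree (mimicking the Markov-inequality step of Lemma~\ref{lem:singletonsreduction}, but now with an absolute threshold, exploiting that a planar graph on $N$ vertices has fewer than $3N$ edges, hence average degree below $6$). Formally: build a spanning tree $T$ of the graph obtained after contracting losing districts; a constant fraction of the blue winning vertices have $\deg_T \le 12$ (say), by Markov applied to $\sum \deg_T(v) < 2N$. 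Call this set $B'$. Then greedily select a maximal $S \subseteq B'$ that is independent in $G$ and such that $G \setminus S$ has at most $k - |S|$ components; each removed vertex of $S$ costs at most $\deg_T(v) \le 12$ new components plus the singleton itself, and the maximality argument (exactly as in Lemma~\ref{lem:singletonsreduction}, with the constant $12$ in place of $\lfloor k(2c+1)/w_{OPT}\rfloor$) forces $(|S|+1)(13+1) > k$ only in the bad case; more carefully one pushes through $|S| \ge \lfloor w'/\text{const}\rfloor$. Pinning the constant to exactly $845$ is then bookkeeping: combining the $(2c+2)$ from Lemma~\ref{lem:singletonsreduction} is \emph{not} what happens here since $845$ has no $c$; instead $845$ must come from a product like (fraction lost to the degree threshold) $\times$ (fraction lost in the maximal-independent-set greedy on a bounded-degree graph, where a graph of max degree $d$ has an independent set of size $\ge N/(d+1)$) — with $d$ a small absolute constant from planarity's degeneracy (a planar graph is $5$-degenerate), this chain of constant factors multiplies out to $845$.

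For the running time, every ingredient is quasi-linear: contracting losing districts and building a spanning tree is $O(n)$; identifying the low-degree blue vertices is $O(n)$; the greedy maximal-independent-like selection, processing vertices in a degeneracy order and maintaining a component count via union–find, is $O(n\,\alpha(n))$ or $O(n\log n)$; and the final "splitting losing districts down to exactly $k$" step is the same leaf-peeling procedure described at the end of Lemma~\ref{lem:singletonsreduction}, costing $O(n)$. The $O(n\log n)$ bound is then comfortable, with the $\log$ factor presumably coming from a sort (e.g.\ to compute the degeneracy ordering or to bucket vertices by degree).

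The main obstacle I anticipate is making the component-count control and the independence requirement play well together in a \emph{single} greedy pass while still extracting a $\Theta(1)$ fraction of the optimum. In Lemma~\ref{lem:singletonsreduction} the only constraint was the component count; now we additionally need the selected singletons to be mutually non-adjacent (otherwise two chosen blue vertices can't both be their own district). Handling this likely requires restricting attention to blue vertices whose \emph{graph} degree (not just tree degree) is bounded — which is where planarity's $5$-degeneracy enters — and then arguing that within this bounded-degree subgraph a maximal independent set is a constant fraction of the relevant blue vertices, while \emph{also} each such vertex's removal creates $O(1)$ components. Threading both bounds through the same set $B'$, and verifying that the optimal singleton solution actually guarantees enough low-degree blue vertices to begin with (this needs that winning singleton districts in the optimum are themselves low-degree \emph{or} can be assumed so after a preprocessing/contraction step), is the delicate part; the rest is constant-chasing to land on $845$.
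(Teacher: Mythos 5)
There is a genuine gap, on two fronts. First, your selection set $B'$ is defined via the optimal solution: you build a spanning tree of ``the graph obtained after contracting losing districts'' and keep the blue winners of small tree-degree. That is fine as an existence argument (it is exactly how Lemma~\ref{lem:singletonsreduction} is proved), but it is not an algorithm, since the optimal partition is unknown. The paper's proof resolves this by \emph{algorithmic} preprocessing whose effect is only \emph{analyzed} against OPT: Phase~I contracts connected red components and recolors any blue vertex with more than $12k/w_{OPT}$ red leaves (guessing $w_{OPT}$ by powers of two), and Phase~II replaces every degree-$2$ or degree-$3$ red vertex by an edge or triangle on its blue neighbours. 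Only after these steps is the average degree of \emph{blue} vertices in the resulting planar graph $G_2$ provably $O(1+k/w_{OPT})$, so that the greedy ``pick the lowest-degree blue vertex'' step is both computable and effective. Your proposal skips exactly the structures these phases exist to neutralize: a chosen blue singleton with many red leaf neighbours spawns that many forced singleton red components (blowing past $k$), and a degree-$2$ red vertex both of whose blue neighbours are chosen becomes isolated. Note that requiring your chosen set $S$ to be independent \emph{in $G$} does not fix the latter, because the two blue neighbours of such a red vertex are not adjacent in $G$; they only become adjacent after the cut-and-connect replacement, which is precisely why the paper takes the independent set in $G_2$.

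Second, your stated reason for wanting independence --- ``so each can be carved off as its own singleton district'' --- is not valid motivation: two adjacent blue vertices can each be a singleton district, since a partition into connected districts imposes no non-adjacency between districts. In the paper, independence (obtained by $5$-coloring the planar graph $G_2$ in linear time and keeping the best color class, costing a factor $5$) serves a different purpose: it guarantees that every red vertex deleted in cut-and-connect has at least one $G_2$-neighbour that is \emph{not} a winning singleton, so it can be reattached to a losing district without increasing the district count or corrupting a winning district. Correspondingly, $845$ is not a degeneracy-times-MIS product; it is $169\times 5$, where $169 = 25 + 144$ comes from the greedy/maximality computation on $G_2$ (degree threshold $12(2+12k/w_{OPT})$ and $w_{OPT}\le k$) and $5$ from the coloring step. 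Your running-time sketch is plausible but moot until the selection step is made algorithmic and the red-vertex obstructions are handled.
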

So it suffices to prove Lemma~\ref{lem:planarapproxsingletons} and we devote the remainder of this section to its proof. We remark that, for clarity and because the optimized constant would be large regardless, we have chosen not to optimize the constants in this proof. 
The restriction to the singleton winning districts case has some useful implications. In particular,
we may assume without loss of generality that $c=2$. This is because districts that are not  blue singletons are considered losing no matter how many candidates there are. The following algorithm requires that we approximately know $w_{OPT}$, the maximum number winning singletons in a partition of $G$ into $k$ districts. Of course this number is unknown, but we can run the algorithm by guessing $\log{n}$ possible values of $w_{OPT}$ (namely, powers of two) and selecting, from the different partitions obtained, the one that maximizes the number of winning districts. This trick costs us at most a factor~$2$ in the approximation guarantee.

Here is a quick overview of the approximation algorithm which consists of four phases. The algorithm aims to select the lowest degree blue vertex in a greedy manner and convert it into a winning singleton. However, to make this algorithm work effectively, we must first modify the graph. In the initial \textit{pruning} step, we contract the red vertices that are connected and put them in the same losing district. Additionally, we convert blue vertices incident to many red leaves into red vertices since they are poor choices for a winning district. In the subsequent \textit{cut-and-connect} step, we eliminate degree two and three red vertices and replace them with edges and triangles to maintain the graph structure. Although these vertices can significantly increase the degrees of blue vertices, they are not problematic since we can avoid making them losing singletons. Once these modifications are complete, we apply the \textit{greedy} step to the resulting graph by selecting blue vertices with low degree to form singleton winning districts, as long as we do not generate more than $k$ components. Finally, in the \textit{5-color} step, the algorithm selects an independent set from the set of blue singleton winning vertices to ensure that the red vertices that were removed in the \textit{cut-and-connect} step do not become singletons.
Let's now detail these four phases.

\

\noindent{\bf Phase I: Pruning.}
The first step is the following pruning procedure which can be implemented in linear time.

\begin{algorithm}[H]
\caption{Pruning}\label{alg:pruning}
\begin{algorithmic}[1]
\State{\textbf{Input:} a connected, planar graph $G$ with vertices colored blue or red}
\State{Contract all connected red vertices}
\While{$\exists$ a blue vertex incident to more than $\frac{12k}{w_{OPT}}$ red leaves (degree $1$ red vertices in $G$)}
\State{Color it red}
\State{Contract its red component into one red vertex}
\EndWhile
\State{\textbf{Output:} The resultant graph $G_1$}
\end{algorithmic}
\end{algorithm}
\begin{lemma}\label{lem:pruning}
    The graph $G_1$ output by the pruning procedure has the following properties:
    \begin{enumerate}
        \item\label{it:lem-prunning1} The set $R_1$ of red vertices in $G_1$ forms an independent set.
        \item\label{it:lem-prunning2} Every blue vertex in $G_1$ is incident to less than $\frac{12k}{w_{OPT}}$ red leaves.
        \item\label{it:lem-prunning3} Any partition of $G_1$ into $k$ districts induces a partition of $G$ into $k$ districts with the same number of winning blue singletons.
        \item The set $B_1$ of the blue vertices of $G_1$ has cardinality $|B_1| \geq w_{OPT}/2$.
    \end{enumerate}
\end{lemma}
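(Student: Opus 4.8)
The plan is to establish the four properties in order, the first three being short structural arguments and the fourth the crux.

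\emph{Properties 1 and 2, plus termination.} First I would note that the while-loop terminates: each iteration contracts a set containing a blue vertex together with at least one red leaf into a single vertex, so $|V|$ strictly decreases and the loop runs at most $n$ times. Property~2 is then immediate, since on exit the loop condition fails. For property~1 I would carry the invariant ``the red vertices form an independent set'' through the run. It holds after the contraction on line~2, and it is preserved by an iteration: recolouring a blue vertex $v$ creates red--red adjacencies only between $v$ and its (previously red, hence pairwise non-adjacent) neighbours, so $v$'s red component is exactly $v$ together with its red neighbours; contracting that component to one vertex $r_v$ leaves the red set independent, because any other red vertex adjacent to $r_v$ would have to be adjacent to $v$ or to some old red vertex, both impossible.

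\emph{Property 3.} The graph $G_1$ arises from $G$ by recolourings (which do not alter the graph) and contractions of connected sets, so $G_1 = G/\mathcal{P}$ for a partition $\mathcal{P}$ of $V(G)$ into connected parts. The key observation I would isolate is that a blue vertex is only ever touched by being recoloured, after which it is contracted only inside its own red component; hence every blue vertex of $G_1$ corresponds to a part of $\mathcal{P}$ equal to a single blue vertex of $G$. Given a $k$-district partition of $G_1$, expanding each district to the union of its parts yields $k$ connected, nonempty districts of $G$ (preimages of connected sets under contraction of connected parts are connected). A district of $G_1$ is a (winning) blue singleton precisely when it is $\{x\}$ for $x$ blue in $G_1$, and its expansion is then the blue vertex $x$ of $G$, again a winning singleton; conversely a blue-singleton district of $G$ can only be the expansion of a blue-singleton district of $G_1$. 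So expansion is a bijection on winning blue singletons, which is property~3.

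\emph{Property 4 (the main obstacle).} Assume $w_{OPT}\ge 1$ (else the statement is trivial). Fix a $k$-district partition $P^{*}$ of $G$ with the maximum number $w_{OPT}$ of blue-singleton districts, and let $S$ be the set of those $w_{OPT}$ blue vertices; let $A$ be the set of vertices recoloured by the procedure. Since no blue vertex outside $A$ is ever contracted, $S\setminus A\subseteq B_1$, so it suffices to prove $|A\cap S| < w_{OPT}/2$. The structural fact I would prove is that a red leaf incident to a vertex $v$ in the current graph, expanded back into $G$, is a connected all-red vertex set whose only neighbour in $G$ is $v$ --- that is, an entire all-red component of $G-v$. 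Hence at the moment $v$ is recoloured it has more than $\tfrac{12k}{w_{OPT}}$ pairwise-distinct all-red components of $G-v$ hanging off it, and for distinct recoloured vertices these families are disjoint (a common component would have both vertices as its unique neighbour in $G$). Now I would run a district-counting argument in $P^{*}$: for $v\in A\cap S$ the district of $v$ in $P^{*}$ is $\{v\}$, so each all-red component hanging off $v$ can reach nothing outside itself and therefore forms its own district of $P^{*}$; over all $v\in A\cap S$ these give more than $|A\cap S|\cdot\tfrac{12k}{w_{OPT}}$ distinct districts, and since $P^{*}$ has only $k$ districts, $|A\cap S| < w_{OPT}/12$. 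Therefore $|B_1|\ge |S\setminus A| = w_{OPT}-|A\cap S| > \tfrac{11}{12}w_{OPT}\ge w_{OPT}/2$. I expect the genuine obstacle here to be the realisation that one should not try to bound the total number $|A|$ of recoloured vertices (which may be much larger than $w_{OPT}$), but only those in $S$, and that these are few exactly because each one is costly in the district budget --- which is precisely what the threshold $\tfrac{12k}{w_{OPT}}$ is engineered to guarantee.
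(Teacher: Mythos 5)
Properties 1--3 and your termination remark are fine and essentially match the paper. For property 4 your strategy is genuinely different from the paper's (the paper contracts each district of the optimal partition to a single vertex, uses planarity to bound the average degree of the winning blue singletons in that minor by $6k/w_{OPT}$, applies Markov, and then shows by a pigeonhole-plus-connectivity argument that no low-degree winning singleton is ever recoloured), whereas you try to bound $|A\cap S|$ directly by a district-charging argument that never invokes planarity. Unfortunately that argument has a gap at the disjointness step. Your justification ``a common component would have both vertices as its unique neighbour in $G$'' only rules out two hanging components being \emph{equal}; it does not rule out one being properly contained in another, and such nesting really happens. Concretely: let $v'$ be adjacent to $v$ and to $m$ red leaves $r_1,\dots,r_m$. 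When $v'$ is recoloured, line 5 contracts $\{v'\}\cup\{r_1,\dots,r_m\}$ into a single red vertex whose only neighbour is $v$, so it becomes a red leaf of $v$; the corresponding hanging component of $v$ is $\{v'\}\cup\{r_1,\dots,r_m\}$, which contains every hanging component $\{r_i\}$ of $v'$. The two families are therefore not vertex-disjoint, and a district counted for $v'$ (some $\{r_i\}$) may be the very district you charge to $v$ for that big component. Since this containment can be iterated (each recoloured vertex's whole blob can become a single red leaf of the next vertex), the claim that the components ``give more than $|A\cap S|\cdot\frac{12k}{w_{OPT}}$ distinct districts'' does not follow as written; you would need either an explicit injective assignment of districts to (vertex, component) pairs that survives the nesting, or a different accounting.

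A secondary, more cosmetic point: a red leaf of the current graph expands to a set of vertices that are \emph{currently} red, not necessarily red in the original colouring of $G$ (it may contain previously recoloured blue vertices), so ``all-red component of $G-v$'' should be read with respect to the evolving colouring; this does not affect the part of your argument that each such component is a full component of $G-v$ containing at least one district of $P^{*}$, which is correct. If you want to repair property 4, the paper's route is the safest: contract the losing districts of the optimal partition, use that this minor is planar to get average degree below $6$, take the half of the winning singletons with degree at most $\frac{12k}{w_{OPT}}$ there, and argue that such a vertex can never accumulate more than $\frac{12k}{w_{OPT}}$ red leaves because two of those leaves would have to lie in a common district, whose connecting path forces either an earlier red contraction or destroys the leaf property.
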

\begin{proof}
Since two red vertices connected by an edge are contracted  in $G_1$, we have that \eqref{it:lem-prunning1} holds. Property \eqref{it:lem-prunning2} holds due to line $3$ of the algorithm. The only graph operation performed to obtain $G_1$ from $G$ is contraction, hence, the connectivity of each district is maintained in $G$. So \eqref{it:lem-prunning3} holds.
Next, we prove the fourth property. Let $G_{OPT} = (V_{OPT} = B_{OPT}\cup R_{OPT}, E_{OPT}) $ be the graph obtained from the optimal decomposition of $G$ into $k$ districts, after the contraction of each losing district into a singleton red vertex. Observe that $G_{OPT}$ is planar because it is a minor of $G$. Therefore, we have 
    $\sum\limits_{v\in V_{OPT}}deg_{G_{OPT}}(v) \ <\ 6\cdot |V_{OPT}| \ =\ 6k$.
It follows that,
    $\frac{1}{w_{OPT}}\cdot \sum\limits_{v\in B_{G_{OPT}}}deg_{OPT}(v) \ < \ \frac{6k}{w_{OPT}}$.
Therefore, the average degree of blue vertices in $G_{OPT}$ is less than $\frac{6k}{w_{OPT}}$. Define $B'_{OPT}:=\{v\in B_{OPT}: deg_{G_{OPT}}(v) \leq \frac{12k}{w_{OPT}}\}$. By Markov's inequality, we have $|B'_{OPT}| \geq  \frac12\cdot w_{OPT}$. 

We will now prove that every vertex in $B'_{OPT}$ remains blue upon termination of the pruning procedure (Algorithm~\ref{alg:pruning}). Observe that a blue vertex can only turn red (at line $4$ in the algorithm) if it is incident to more than $\frac{12k}{w_{OPT}}$ red leaves at some point. For a contradiction, assume this is true for some vertex $v\in B_{OPT}'$. By definition of $B'_{OPT}$, we have $deg_{G_{OPT}}(v) \leq \frac{12k}{w_{OPT}}$. Therefore at the time when $v$ is incident to more than $\frac{12k}{w_{OPT}}$ red leaves, at least two of those leaves, say $r_1$ and $r_2$ belong to the same losing district in $OPT$. Every district is connected so we can find a path $P$ connecting those two leaves that does not pass through $v$. We have two case. First, at this time, all the vertices on $P$ are red. But then $r_1$ and $r_2$ should have been contracted together, a contradiction. Second, there is a blue vertex $b$ on $P$. But then $r_1$ and $r_2$ cannot be leaves at this time, a contradiction.
Thus we have $|B_1| \geq |B'_{OPT}| \geq  \frac12\cdot w_{OPT}$.
 \end{proof}

\noindent{\bf Phase II: Cut \& Connect.}
The second step of the approximation algorithm is the cut and connect procedure.
This procedure simply (i) replaces each red vertex of degree two by an edge connecting its two blue neighbours, and (ii) replaces each red vertex of degree three by a triangle connecting its three blue neighbours. This is illustrated
in Figure~\ref{fig:cut-and-connect}.

\vspace{-0.3cm}

\begin{algorithm}[H]
\caption{\textit{Cut and Connect}}\label{alg:CandC}
\begin{algorithmic}[1]
\State{\textbf{Input:} $G_1$ from Algorithm 1}
\State{Obtain a new graph $G_2$ from $G_1$ by turning all degree 2 and 3 red vertices into edges that connect the neighbouring blue vertices}
\State{\textbf{Output:} a new graph $G_2$}
\end{algorithmic}
\end{algorithm}

\vspace{-1cm}

\begin{figure}[h]
\centerline{\includegraphics[width = 8cm]{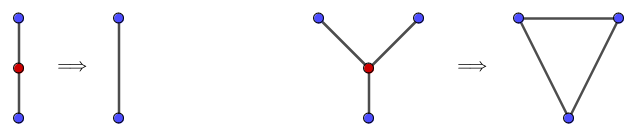}}
\caption{The Cut and Connect procedure}\label{fig:cut-and-connect}\vspace{-.7cm}
\end{figure}

\begin{lemma}\label{lem:CandC}
    The graph $G_2$ output by the cut-and-connect procedure has the properties:
  \begin{enumerate}
        \item The set $R_2$ of red vertices in $G_2$ form an independent set.
        \item Every blue vertex in $G_2$ is incident to less than $\frac{12k}{w_{OPT}}$ red leaves.
        \item The set $B_2$ of blue vertices in $G_2$ has cardinality $|B_2| \geq w_{OPT}/2$.
        \item $G_2$ is planar.
        \item $|B_2| \geq \frac{|V_2|}{2+\frac{12k}{w_{OPT}}}$. 
    \end{enumerate}
\end{lemma}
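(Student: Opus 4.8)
The plan is to check the five properties one at a time, the point being that the cut-and-connect operation is extremely local: it only deletes red vertices of degree $2$ or $3$, replacing their incident edges by edges among blue vertices, and it never deletes a blue vertex nor creates a red-red edge. With that observation, properties 1--3 are essentially inherited from $G_1$. Since $R_1$ is independent (Lemma~\ref{lem:pruning}) and every new edge joins two blue vertices, $R_2\subseteq R_1$ is again independent, giving property~1. Since no blue vertex is ever deleted, $B_2=B_1$, so $|B_2|\ge w_{OPT}/2$ is precisely the last part of Lemma~\ref{lem:pruning}, giving property~3. For property~2, note a red leaf has degree $1$, not $2$ or $3$, so red leaves are untouched, and red-vertex degrees are unchanged (all neighbours of a red vertex are blue, and no blue vertex or red-blue edge is removed); hence the red leaves incident to each blue vertex are exactly what they were in $G_1$, and property~2 follows from Lemma~\ref{lem:pruning}.

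For planarity (property~4) I would start from a fixed planar embedding of $G_1$ and perform the replacements in place. For a degree-$2$ red vertex $r$ with blue neighbours $u,v$, smooth $r$: delete $r$ and route a new edge $uv$ inside one of the two faces incident to $r$, running close to where $r$ sat. For a degree-$3$ red vertex $r$ with blue neighbours $u,v,w$ occurring in this cyclic order around $r$, draw the edges $uv$, $vw$, $wu$, each inside the face incident to $r$ bounded by the corresponding pair of edges at $r$ and close to $r$, and then delete $r$; the three faces around $r$ merge into a single face containing the new triangle. Because the red vertices form an independent set, the small regions being altered involve pairwise distinct, non-adjacent vertices, so these local moves do not interfere and the embedding stays planar throughout. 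If a replacement would duplicate a blue-blue edge already present, keep a single copy; this preserves planarity and all the other listed properties, and in any case no new red-blue edge is ever introduced.

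For property~5 I would bound $|R_2|$ against $|B_2|$. First, every red vertex of $G_2$ has degree $1$ or at least $4$ (its degree equals its degree in $G_1$, which is at least $1$, and it cannot be $2$ or $3$ or it would have been removed), so $R_2$ splits into red leaves and red vertices of degree $\ge 4$. By property~2, each of the $|B_2|$ blue vertices carries fewer than $\tfrac{12k}{w_{OPT}}$ red leaves, and each red leaf is counted once (its unique neighbour is blue), so there are fewer than $\tfrac{12k}{w_{OPT}}\,|B_2|$ red leaves. For the rest, let $R_2^{\ge 4}$ be the red vertices of degree $\ge 4$ and consider the bipartite graph $H$ on $R_2^{\ge 4}\cup B_2$ whose edges are the red-blue edges incident to $R_2^{\ge 4}$; it is a simple planar bipartite graph (no new red-blue edges were created), so $|E(H)|\le 2(|R_2^{\ge 4}|+|B_2|)-4$, while each vertex of $R_2^{\ge 4}$ contributes at least $4$ edges, giving $4|R_2^{\ge 4}|\le 2(|R_2^{\ge 4}|+|B_2|)-4$ and hence $|R_2^{\ge 4}|<|B_2|$ (the degenerate cases $|R_2^{\ge 4}|+|B_2|<3$ are immediate since then $R_2^{\ge 4}=\emptyset$ and $|B_2|\ge w_{OPT}/2\ge 1$). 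Adding the two estimates gives $|R_2|<\bigl(1+\tfrac{12k}{w_{OPT}}\bigr)|B_2|$, so $|V_2|=|B_2|+|R_2|<\bigl(2+\tfrac{12k}{w_{OPT}}\bigr)|B_2|$, which is property~5.

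The only step that requires genuine care is property~4: one must be explicit that the degree-$2$ smoothing and the degree-$3$ replacement of a star by a triangle are local operations that extend a planar embedding, and must dispose of the parallel blue-blue edges they can create. The remaining four properties are bookkeeping about exactly which vertices and edges the operation does and does not touch, together with a one-line planar edge-count.
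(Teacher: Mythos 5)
Your proof is correct and follows essentially the same route as the paper: properties 1--3 are inherited from Lemma~\ref{lem:pruning} since cut-and-connect only deletes degree-2/3 red vertices and adds blue--blue edges, property 4 is the same local-embedding observation, and your bipartite graph $H$ on $R_2^{\ge 4}\cup B_2$ is exactly the paper's auxiliary graph $G_2^*$ (red leaves and blue--blue edges removed), with the planar-bipartite edge bound $|E|\le 2n-4$ playing the role of the paper's ``average degree less than four'' to conclude $|R_2^{\ge 4}|<|B_2|$. Your treatment of the degenerate cases and of parallel blue--blue edges is slightly more careful than the paper's, but the argument is the same.
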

\begin{proof}
The first three properties hold by Lemma~\ref{lem:pruning}. The fourth property holds because replacing each red vertex of degree two by an edge 
and replacing each red vertex of degree three by a triangle maintains planarity. So let's prove the fifth property that $|B_2| \geq \frac{|V_2|}{2+\tfrac{12k}{w_{OPT}}}$. To show this, we apply one last set of transformations. 
We create a new planar graph $G_2^*$ from $G_2$ as follows:
    \begin{itemize}
        \item Delete all the red leaves of $G_2$.
        \item Delete all the edges between two blue vertices. 
    \end{itemize}
    Clearly $G_2^*=(B_2\cup R^*_2, E^*_2)$ is planar. Furthermore, it is bipartite because both the red vertices $R^*_2$ and the blue vertices $B_2$ form independent sets.  
    It is well-known that the average degree in a planar, bipartite graph is less than four.
   Moreover, because every red vertex in $G_2^*$ has degree at least $4$, the average degree of the blue vertices is less than $4$. Hence:
    \begin{equation}\label{eq:4B>4R}
        4\cdot |B_2| \ > \ \sum_{v\in B_2} deg_{G_2^*}(v) \ =\   \sum_{v\in R_2^*}deg(v) \ \geq\ 4\cdot |R_2^*|
    \end{equation}
    Here the equality holds because $G_2^*$ is bipartite.
   Thus, from (\ref{eq:4B>4R}), we have 
     $|B_2| \ > |R_2^*|$.
    Now consider $G_2$. Its vertex set is $V_2 = R_2^*\cup B_2 \cup \{\text{red leaves of }G_2\}$. Thus
    \begin{align*}
        |V_2| \ =\ |R_2^*| + |B_2| + |\{\text{red leaves of }G_2\}|
        &\leq \ 2\cdot |B_2| +\frac{12k}{w_{OPT}} \cdot |B_2|\\ &= \ \left(2+\frac{12k}{w_{OPT}}\right) \cdot |B_2|
    \end{align*}
    Here the inequality follows from (\ref{eq:4B>4R}) and the second property. 
Rearranging, we obtain
$|B_2| \ge \frac{|V_2|}{2+\tfrac{12k}{w_{OPT}}}$. 
So the fifth property holds. 
 \end{proof}

\noindent{\bf Phase III: Greedy.}
The third step in the approximation algorithm is a greedy procedure.

\vspace{-0.2cm}

\begin{algorithm}[H]
\caption{\textit{Greedy Algorithm on $G_2$}}\label{alg:greedy}
\begin{algorithmic}[1]
\State{\textbf{Input:} $G_2$ from Algorithm 2}
\While{there exists a blue vertex that is not a singleton component}
\State{Pick the blue vertex $v$ with the lowest degree}
\If{the graph obtained by removing the edges adjacent to $v$ has less than $k$ components}
\State{Remove the edges adjacent to $v$ so that $v$ is now a winning district}
\EndIf
\EndWhile
\State{\textbf{Output:} A partition $\{D_1,\dots, D_{\hat{k}}\}$ of the graph $G_2$ where each $D_i$ corresponds to a connected component and $\hat{k} \leq k$}
\end{algorithmic}
\end{algorithm}


\begin{lemma}\label{lem:phase4}
A $(\hat{k},\lfloor \frac{w_{OPT}}{169} \rfloor)$-partition of $G_2$ is output by the greeedy procedure, with $\hat{k} \le k$.
\end{lemma}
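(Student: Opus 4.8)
The plan is to show that the greedy procedure (Algorithm~\ref{alg:greedy}) produces at least $\lfloor w_{OPT}/169 \rfloor$ winning singleton districts on $G_2$, using the degree/density bounds from Lemma~\ref{lem:CandC}. The key quantitative facts are: (a) $|B_2| \geq \frac{|V_2|}{2 + 12k/w_{OPT}}$ (fifth property), and (b) the greedy step only refuses to isolate a blue vertex $v$ once removing its incident edges would push the component count over $k$. I would first argue that as long as greedy has isolated only a small number of vertices, there is always a \emph{low-degree} blue vertex available to isolate. Concretely: if $S$ denotes the set of blue vertices isolated so far and $|S|$ is below the target, then the remaining blue vertices $B_2 \setminus S$ still form a large set, and since $G_2$ restricted to the relevant part is (a minor of) a planar graph on at most $|V_2|$ vertices, the average degree of $B_2 \setminus S$ is $O(|V_2|/|B_2|) = O(1 + k/w_{OPT})$. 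Hence the minimum-degree blue vertex picked at line 3 has degree at most some constant multiple of $1 + k/w_{OPT}$, say at most $\alpha k/w_{OPT}$ for an appropriate absolute constant $\alpha$ (using $k \ge w_{OPT}$, which we may assume since $w_{OPT}$ winning singletons force $k \ge w_{OPT}$).

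Next I would bound the number of new components created each time greedy isolates such a vertex $v$. Removing the edges at $v$ creates the singleton $\{v\}$ plus at most $\deg_{G_2}(v) - 1 \le \deg_{G_2}(v)$ additional components among $v$'s former neighbours — but we must be careful, since a red vertex of $G_2$ that becomes isolated is itself a (losing) component, and red vertices have already been reduced to leaves or degree $\ge 4$ by the cut-and-connect step. The cleanest accounting is: each isolation of a degree-$d$ blue vertex increases the total component count by at most $d$. So after isolating $t$ blue vertices, each of degree at most $\alpha k / w_{OPT}$, the component count is at most $1 + t \cdot \alpha k/w_{OPT}$. The greedy loop only halts early (before every blue vertex is a singleton) if it cannot isolate the current minimum-degree vertex, i.e.\ if doing so would exceed $k$ components; by the degree bound this cannot happen while $1 + (t+1)\cdot \alpha k/w_{OPT} \le k$, i.e.\ while $t \lesssim w_{OPT}/\alpha$. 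Therefore greedy isolates at least roughly $w_{OPT}/\alpha$ blue vertices, and tuning $\alpha$ against the density constant $2 + 12k/w_{OPT}$ from Lemma~\ref{lem:CandC} yields the stated constant $169$ (recalling the paper's remark that constants are not optimized).

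The main obstacle I anticipate is the interaction between the \emph{minimum-degree guarantee} and the \emph{component-count budget}: one has to be sure that the low-degree blue vertices are not all "used up" or blocked before the budget is exhausted, and that isolating a blue vertex does not inadvertently create many components through chains of red vertices. This is exactly why Phases I and II were run first — pruning guarantees red vertices form an independent set and blue vertices have few red leaves, and cut-and-connect removes the problematic degree-2 and degree-3 red vertices, so that after removing a blue vertex's edges the only new components are (i) the singleton $\{v\}$ itself, (ii) newly-created red leaves, and (iii) genuine splits of the blue part, all controlled by $\deg(v)$. I would make this bookkeeping precise with a potential-function/amortized argument: track the quantity (number of components) $+\ c_1\cdot$(number of non-singleton blue vertices remaining) and show each greedy step changes it by a bounded amount, so that the loop runs for the required number of iterations before either (a) all blue vertices are singletons, or (b) the $k$-component cap binds — and in case (b) we have already isolated $\ge \lfloor w_{OPT}/169\rfloor$ of them. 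Finally, note $\hat k \le k$ holds by construction since line 4 never lets the component count exceed $k$.
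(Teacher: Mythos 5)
Your proposal is correct and follows essentially the same route as the paper: planarity of $G_2$ plus the blue-density bound from the fifth property of Lemma~\ref{lem:CandC} give a large supply of blue vertices of degree $O(1+k/w_{OPT})$, and the component-budget accounting (each isolation of a degree-$d$ vertex adds at most $d$ components, with $w_{OPT}\le k$) yields the $\lfloor w_{OPT}/169\rfloor$ bound. The only cosmetic difference is that the paper phrases this statically, applying Markov's inequality once to extract a set $B_2'$ of low-degree blue vertices and then bounding any maximal feasible subset $B^*\subseteq B_2'$, whereas you run the same degree-averaging and budget argument iteratively along the greedy execution.
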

\begin{proof}
Again, because $G_2$ is planar, its average degree is less than $6$.
Furthermore, because 
$|B_2| \geq \frac{|V_2|}{2+\tfrac{12k}{w_{OPT}}}$, by property 5 in Lemma~\ref{lem:CandC},
the average degree of the blue vertices in $G_2$ is less than $6\cdot (2+\frac{12k}{w_{OPT}})$. Let $B_{2}':=\{v\in B_{2}: deg_{G_2}(v) \leq  12\cdot (2+\frac{12k}{w_{OPT}})\}$. By Markov's inequality 
and the third property of Lemma~\ref{lem:CandC},
we have $|B_{2}'| \geq  \nicefrac{1}{2}\cdot |B_{2}| \geq \nicefrac{1}{4} \cdot w_{OPT}$.
Now take a maximal set $B^*\subseteq B'_2$ such that $G\setminus B^*$ (the graph obtained by deleting $B^*$ from $G$) contains at most $k-|B^*|$ components. If $B^*=B'_2$ then selecting $B^*$ as
the winning singleton districts gives a factor $4$ approximation guarantee.
Otherwise, observe that any vertex $v\in B^*$ has degree at most $24 + \frac{144k}{w_{OPT}}$, hence removing $v$ creates at most 
$25+\frac{144k}{w_{OPT}}$ new components (including the singleton district $v$ itself). By the maximality of $B^*$,
adding one more element would increase the number of district above $k$. So the number of winning singletons is  at least $\Big\lfloor \frac{k}{25+\tfrac{144k}{w_{OPT}}} \Big\rfloor \geq \Big\lfloor \frac{k}{\tfrac{169k}{w_{OPT}}} \Big\rfloor = \lfloor \frac{w_{OPT}}{169} \rfloor$, where the inequality comes from the fact that $w_{OPT} \leq k$.
 \end{proof}

\noindent{\bf Phase IV: 5-Coloring.}
However, we are not done: we have a partition $\mathcal{D}=\{D_1,\dots, D_k\}$ for $G_2$ but do not yet
have a feasible partition for $G_1$ (and thus $G$).
The issue is we deleted a collection of red vertices 
during the cut-and-connect procedure.
We cannot add them to $\mathcal{D}$ as singleton districts as this will 
create too many districts. So we must connect these red vertices to existing
districts in $\mathcal{D}$. But then the problem is these red vertices were only connected to blue vertices, so adding them to a winning singleton district will cause the district to be losing. 

To overcome this observe that such red vertices had degree two or three
and that, as shown in Figure~\ref{fig:cut-and-connect}, after deletion their neighbours become pairwise adjacent in $G_2$.
Thus, if we choose a set of winning singleton districts that
form an independent set then every deleted red vertex has at least one
neighbour in $G_2$ that is in a losing district. We may then simply
add the red vertex to that losing district.
But can we quickly find a large set winning singleton districts that
form an independent set? Yes, we simply
take advantage of a significant result established in \cite{Chiba1981}: a planar graph can be $5$-colored in linear time. 
This leads to the fourth and final step in our approximation algorithm, the following 5-color procedure. 

\vspace{-0.3cm}

\begin{algorithm}[H]\label{alg:5-Col}
\caption{\textit{5-Color}}
\begin{algorithmic}[1]
\State{\textbf{Input:} $G_2$ and its partition from Algorithm 2}
\State{5-color the graph $G_2$}
\State{Take the vertices from the color class containing the largest number of winning singleton districts as the new winning districts}
\For{each red vertex deleted in the Algorithm \ref{alg:CandC}}
\State{Add it to an adjacent losing district}
\EndFor
\State{\textbf{Output:} a new partition $D'_1,..., D'_{\Tilde{k}}$ of the graph $G_1$ with $\Tilde{k}\leq k$}
\end{algorithmic}
\end{algorithm}
\begin{proof}(of Lemma~\ref{lem:planarapproxsingletons})
Using the largest color class in the $5$-coloring may cost
an extra factor~$5$ in the approximation guarantee beyond that given by Lemma~\ref{lem:phase4}.
Thus the final partition is a $(\Tilde{k},\lfloor \frac{w_{OPT}}{845} \rfloor)$-partition in $G_1$. By Lemma~\ref{lem:pruning}, it induces a $(\Tilde{k},\lfloor \frac{w_{OPT}}{845} \rfloor)$-partition in $G$. We can then turn it into a $(k,\lfloor \frac{w_{OPT}}{845} \rfloor)$-partition by splitting the losing districts as we did in the proof of Lemma~\ref{lem:singletonsreduction}. This concludes the proof of Lemma~\ref{lem:planarapproxsingletons}
and, thus, also of Theorem~\ref{thm:planarapproxlinear}.
 \end{proof}

\section{A PTAS for Singleton Winning Districts}\label{sec:ptas}
By Theorem~\ref{thm:planarapproxlinear}, there is a constant
factor approximation algorithm for the gerrymandering problem in planar graphs.
This result holds for any number of candidates $c$, provided $w_{OPT}$ is a sufficiently large multiple of $c$.
But can we do any better? Is it possible that there is a polynomial time
approximation scheme (PTAS) for the gerrymandering problem in planar graphs
when $w_{OPT}$ is a sufficiently large? This problem remains open. However, we can obtain a PTAS for 
a slight variant of the gerrymandering problem: the {\em singleton winning district gerrymandering problem}. This is the same as the standard gerrymandering problem except the objective is to maximize the number of {\em singleton} winning districts for the blue player. While this objective is contrived, from
a theoretical perspective this is an important problem due to its close relation to the standard gerrymandering problem, as we illustrated by Lemma~\ref{lem:singletonsreduction}.

\begin{restatable}{theorem}{ptas}\label{thm:ptas}
    There is a polynomial-time approximation scheme (PTAS) for the singleton winning district gerrymandering problem in planar graphs.
\end{restatable}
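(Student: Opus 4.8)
The plan is to run Baker's layer-shifting technique~\cite{Baker1994} on top of the exact algorithm for bounded-outerplanar graphs. Fix $\epsilon>0$ and set $\lambda=\lceil 1/\epsilon\rceil$. First I would compute a planar embedding of $G$ together with its layer decomposition $L_1,\dots,L_p$. For each shift $s\in\{0,1,\dots,\lambda\}$, delete the \emph{separator layers} $D_s=\bigcup\{L_i : i\equiv s \pmod{\lambda+1}\}$; every connected component (a \emph{block}) of $G-D_s$ then spans at most $\lambda-1$ consecutive layers, hence is $O(\lambda)$-outerplanar and has treewidth $O(\lambda)$. The sets $D_s$ are pairwise disjoint and cover $V$, which is what makes the averaging argument below work.

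The first step is to solve each block exactly. The dynamic program behind Theorem~\ref{thm:boundedTW} already keeps the number of districts formed so far as part of its state, so with the cosmetic change of taking the objective to be ``the number of singleton blue districts'' it produces, for each block $B$ and each budget $\kappa \in \{1,\dots,|V(B)|\}$, the maximum number of singleton blue winners over partitions of $B$ into at most $\kappa$ connected districts; with unit weights and constant $\lambda$ this is polynomial. The second step stitches the blocks back together: a knapsack-style dynamic program picks block budgets $(\kappa_j)$ with $\sum_j \kappa_j \le k$ maximising the total number of winners, then each vertex of $D_s$ is re-attached to an adjacent losing district (creating no new district), and finally the partition is padded to exactly $k$ districts by splitting losing districts, as in the proof of Lemma~\ref{lem:singletonsreduction}. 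The third step outputs the best solution found over all $\lambda+1$ shifts (guessing $w_{OPT}$ up to a factor two so the block DPs can be run, as before).

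For the approximation guarantee I would take an optimal $(k,w_{OPT})$-partition with singleton winner set $S^{*}$. Averaging over $s$, some shift $s^{*}$ has $|S^{*} \cap D_{s^{*}}| \le w_{OPT}/(\lambda+1) \le \epsilon\, w_{OPT}$, so $S^{*}\setminus D_{s^{*}}$ is a set of at least $(1-\epsilon)w_{OPT}$ singleton winners, each lying inside a single block. It then remains to show these winners can be realised block-by-block within the global budget: inside each block $B_j$ keep $S^{*}\cap B_j$ as singletons and recombine the remainder of $B_j$ into the fewest connected districts possible, one per component of $B_j \setminus S^{*}$; comparing the resulting $\sum_j \kappa_j$ with the $|S^{*}|$ winning districts plus the (at least ``number of components of $G\setminus S^{*}$'') losing districts used by OPT, and charging the surplus components --- together with the homes needed for the $D_{s^{*}}$-vertices --- against the at most $\epsilon w_{OPT}$ winners we gave up, while noting that any blue vertex left isolated may simply be promoted to a winner, one aims to conclude that the total stays at most $k$, so the knapsack DP finds at least $(1-O(\epsilon))w_{OPT}$ winners.

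The step I expect to be the real obstacle is precisely this reconciliation of the \emph{global} budget $k$ with the block decomposition. Unlike the textbook Baker setting (independent set, vertex cover, \dots), here every vertex must be placed in some district, and a single connected district of OPT can shatter into many fragments once the separator layers are deleted, so a careless accounting overspends the budget; pushing the charging argument through --- in essence, bounding the number of districts needed to absorb all shattered losing fragments together with the separator-layer vertices --- is where the work lies, and it may be necessary to first apply a pruning step in the spirit of Phase~I of Section~\ref{sec:constant-approx}. Should a direct charging prove too lossy, the fallback is to replace the independent per-block solves and the knapsack by a single dynamic program running along the sequence of $\lambda$-layer chunks that threads the interface between consecutive chunks; this is more delicate, but is eased by the fact that winning districts are singletons and therefore never need to be tracked across a chunk boundary. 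Everything else --- the bounded-treewidth DP on each block, the knapsack combine, and the shift averaging --- is routine.
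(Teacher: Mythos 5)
Your plan has a genuine gap, and it is exactly the one you flag yourself: the reconciliation of the global budget $k$ with the per-block decomposition. Deleting the separator layers $D_s$ can shatter a single connected losing district of OPT into arbitrarily many fragments (a losing district may wind through many layers, and removing a vertex of degree $d$ can create up to $d-1$ new components), so the quantity $\sum_j(\text{number of components of } B_j\setminus S^*)$ is not bounded by the number of components of $G\setminus S^*$ plus $O(\epsilon\, w_{OPT})$; it can exceed it by an amount proportional to $\sum_{v\in D_s}\deg(v)$, which has nothing to do with $w_{OPT}$. Promoting stranded blue singletons to winners and re-attaching $D_s$-vertices to adjacent districts does not repair this, because the red fragments still each demand a district. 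Neither the charging argument nor the knapsack combine goes through as stated, and the pruning of Phase~I does not obviously help since it is calibrated to degrees relative to $k/w_{OPT}$, not to the layer structure.

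The paper avoids this entirely by never deleting the separator layers: in its preprocessing (Algorithm~\ref{alg:ptas}) the layers $L_i$ with $i\equiv j \pmod{\lambda}$ are instead \emph{recolored red and contracted} (each contracted layer piece becomes a single red vertex per connected component). The resulting graph $G_{\lambda,j}$ stays connected, every partition of it lifts back to a partition of $G$ with the same number of blue singleton winners and the same district count (Lemma~\ref{lem:preprocess}), and---this is the key structural claim, Lemma~\ref{lem:bounded-treewidth}---the whole of $G_{\lambda,j}$ has treewidth at most $3\lambda-1$, proved by induction by gluing the tree decompositions of consecutive $\lambda$-layer slabs at the contracted red vertices. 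One then runs the single global dynamic program of Theorem~\ref{thm:boundedTW}, which already carries the district budget $\kappa\le k$ in its state, so no knapsack, no re-attachment, and no charging argument are needed; the only loss is the $\le w_{OPT}/\lambda$ winners that sat in the recolored layers. Your fallback suggestion (one DP threading across chunk interfaces) is morally the right fix, but to make your write-up a proof you would need to either carry it out or adopt the recolor-and-contract device; the per-block solve plus knapsack, as proposed, does not yield a feasible $(k,\cdot)$-partition.
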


The first step in our PTAS is a preprocessing step 
based upon Baker's classical partitioning technique
for planar graphs~\cite{Baker1994}.

\vspace{-0.5cm}

\begin{algorithm}[H]
\caption{Preprocessing via Baker's Technique}\label{alg:ptas}
\begin{algorithmic}[1]
\State{\textbf{Input:} A planar graph $G$ with vertices colored blue or red, an integer $\lambda$, and an integer $j \in \{0,\lambda-1\}$}
\State{Fix a planar embedding $P$ of $G$}
\State{Set $i = 1$}
\While{$P$ non-empty}
\State{Set $L_i$ to be the vertices on the outer face of $P$}
\If{$i \equiv j \mod \lambda$}
\State{color every vertex $u \in L_i$ red}
\State{Contract all edges $(u,v)$ with  $u,v \in L_i$}
\EndIf
\State{Delete $L_i$ from $P$}
\State{Set $i = i+1$}
\EndWhile
\State{\textbf{Output:} The resultant graph $G_{\lambda,j}$}
\end{algorithmic}
\end{algorithm}

\begin{sloppypar}
So, for a fixed $\lambda$, the preprocessing step creates a set of graphs
$\{G_{\lambda,0}, G_{\lambda,1}, \dots, G_{\lambda,\lambda-1}\}$.
We claim that at least one of these graphs has a solution almost as large
as the optimal solution in the original graph $G$.
This is guaranteed by the next lemma.
\end{sloppypar}
\begin{lemma}\label{lem:preprocess}
With only singleton winning districts the following hold:
\begin{enumerate}
    \item Any $(k,w)$-partition of $G_{\lambda,j}$ induces a $(k,w)$-partition of $G$, for all $k,w$. \label{preprocess1}
    \item For any $\lambda$, there exists $j \in \{0,\lambda-1\}$ and $w^*\geq w_{OPT} \cdot \frac{\lambda-1}{\lambda}$ such that there exists a $(k,w^*)$-partition of $G_{\lambda,j}$. \label{preprocess2}
\end{enumerate}
\end{lemma}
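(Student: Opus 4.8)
The plan is to handle the two parts separately; part~(\ref{preprocess1}) is the routine direction and part~(\ref{preprocess2}) carries the content. Throughout I would assume $\lambda\ge 2$, the only interesting case (for $\lambda=1$ the claimed bound $\frac{\lambda-1}{\lambda}w_{OPT}=0$ is vacuous, and the graph collapses entirely).

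For part~(\ref{preprocess1}) the idea is to un-contract. Observe $G_{\lambda,j}$ is obtained from $G$ only by (i) recolouring red every vertex of a neutralised layer $L_i$ with $i\equiv j\bmod\lambda$, and (ii) contracting the edges inside those layers; since each such contraction merges a connected vertex set, every super-vertex of $G_{\lambda,j}$ has a connected preimage in $G$. Given a $(k,w)$-partition of $G_{\lambda,j}$, I would replace each part by the union of the preimages of its vertices: this is a partition of $V(G)$ into the same number $k$ of parts, and each part is connected because the preimage of a contracted-connected set is connected. Finally, a singleton winning district of $G_{\lambda,j}$ is a lone blue vertex $v$; the neutralised layers are all-red, so $v$ lies in a layer $L_i$ with $i\not\equiv j$, hence $v$ was never contracted and its preimage is $\{v\}$, still a blue singleton in $G$. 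So all $w$ winning singletons survive and we get a $(k,w)$-partition of $G$.

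For part~(\ref{preprocess2}) I would start from an optimal $(k,w_{OPT})$-partition $\mathcal D$ of $G$ whose winning districts are $w_{OPT}$ blue singletons, and let $f_i$ be the number of them in layer $L_i$. Partitioning the layer indices by residue mod $\lambda$, some residue $j$ has $\sum_{i\equiv j}f_i\le\lfloor w_{OPT}/\lambda\rfloor\le w_{OPT}/\lambda$, so at least $w_{OPT}\cdot\frac{\lambda-1}{\lambda}$ of the winning singletons of $\mathcal D$ lie in layers $i\not\equiv j$; I claim this $j$ works. First recolour: viewing $\mathcal D$ as a partition of the recoloured graph $G'$ (same vertices and edges as $G$, neutralised layers now red), the only winning singletons lost are those that turned red, so $\mathcal D$ is a $(k,w')$-partition of $G'$ with $w'\ge w_{OPT}\cdot\frac{\lambda-1}{\lambda}$. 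Next, obtain $G_{\lambda,j}$ from $G'$ by contracting the edges inside the neutralised layers one at a time. When an edge $\{u,v\}$ is contracted (both endpoints red): if $u,v$ lie in the same current part, that part stays connected and nothing changes; otherwise merge the two parts, which stays connected since $\{u,v\}$ joins them. Either way the part count drops by at most one, and no winning singleton $\{x\}$ is touched, since $x$ is blue so $x\notin\{u,v\}$ and $\{x\}$ is neither of the parts containing $u$ or $v$. Carrying this through every contraction yields a $(k'',w')$-partition of $G_{\lambda,j}$ with $k''\le k$. To restore the count to exactly $k$, I would repeatedly take a non-singleton losing district, a spanning tree of it, and peel off a leaf as a new losing district — the same splitting move used in the proof of Lemma~\ref{lem:singletonsreduction} — until there are $k$ parts; winning singletons are never affected. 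This gives the desired $(k,w^*)$-partition with $w^*=w'\ge w_{OPT}\cdot\frac{\lambda-1}{\lambda}$.

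The step I expect to be the main obstacle, or at least the one needing the most care, is the contraction argument in part~(\ref{preprocess2}): the whole point of recolouring the neutralised layers red rather than deleting them is that (a) $G_{\lambda,j}$ stays connected and hence still admits partitions into many parts, and (b) contracting red edges can only merge losing material and can never swallow one of the surviving blue singletons, so the case analysis goes through cleanly. A secondary technical point is the final step forcing exactly $k$ parts, which needs $G_{\lambda,j}$ to have at least $k$ vertices; I would dispatch this exactly as in the bookkeeping at the end of the proof of Lemma~\ref{lem:singletonsreduction}.
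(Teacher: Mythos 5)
Your proof is correct and follows essentially the same route as the paper's: part~(1) by un-contracting (which the paper dismisses as ``trivially by construction''), and part~(2) by averaging over the $\lambda$ residue classes to choose $j$ so that at most $w_{OPT}/\lambda$ of the winning blue singletons lie in neutralised layers, then observing that the optimal partition survives the recolouring and the red--red contractions. You are in fact more explicit than the paper about the one delicate point---that contracting red edges can merge districts, so the district count must be restored to exactly $k$ by splitting losing parts---which the paper's phrase ``induces a $(k,w^*)$-partition'' glosses over.
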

\begin{proof}
Observe that \eqref{preprocess1} holds trivially by construction of the preprocessing algorithm.
So consider \eqref{preprocess2} Let $\lambda$ be an integer and take a $(k,w_{OPT})$-partition of $G$
for the singleton winning district gerrymandering problem. Denote by $B_W$ the set of winning blue vertices. Then there exists $j$ such that $|B_W \cap \bigcup_{i \equiv j \mod{\lambda}}L_i| \leq \frac{1}{\lambda} \cdot w_{OPT}$. Let $w^*:= |B_W \setminus{ \bigcup_{i \equiv j \mod{\lambda}}L_i}|$, the partition in $G$ then induces a $(k,w^*)$-partition in $G_{\lambda,j}$, as required.
 \end{proof}

Lemma~\ref{lem:preprocess} implies that by solving the 
singleton winning district gerrymandering problem in the appropriate 
graph $G_{\lambda,j}$ may produce a PTAS for the original graph $G$.
But how can we solve the problem in $G_{\lambda,j}$.
The key is to demonstrate that each graph $G_{\lambda,j}$ has bounded tree-width. 

\begin{lemma}\label{lem:bounded-treewidth}
 Each graph $G_{\lambda,j}$ has treewidth at most $3\lambda-1$. 
\end{lemma}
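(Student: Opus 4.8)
The plan is to show that $G_{\lambda,j}$ is (essentially) a $\lambda$-outerplanar graph and then invoke the standard fact, cited already in the discussion after Theorem~\ref{thm:boundedTW} (see Bodlaender~\cite{Bodlaender1998}), that a $\lambda$-outerplanar graph has treewidth at most $3\lambda-1$. The key observation is that the preprocessing algorithm (Algorithm~\ref{alg:ptas}) operates layer by layer on a fixed planar embedding $P$ of $G$, peeling off the outer face $L_i$ at iteration $i$. Every $\lambda$ iterations — precisely when $i \equiv j \bmod \lambda$ — the algorithm recolors the vertices of $L_i$ red and contracts all edges within $L_i$; on every other iteration it simply records $L_i$ and deletes it. Crucially, contracting edges inside a single layer and deleting vertices both preserve planarity and cannot increase the number of layers in the induced layered embedding.

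First I would make precise what the layers of $G_{\lambda,j}$ look like. Consider the blocks of consecutive indices between two successive "special" layers, i.e. the index sets $\{j\lambda+1, \dots, (j+1)\lambda - 1\}$ lying strictly between layers $L_{i}$ and $L_{i'}$ with $i \equiv i' \equiv j \bmod \lambda$; together with the handling of the initial segment before the first special layer, each such block contributes at most $\lambda - 1$ non-contracted layers, and the special layer $L_i$ with $i\equiv j$ is contracted into a (red) vertex set that merges into the adjacent structure. The effect of contracting each special layer is to "glue" consecutive blocks together through a single contracted boundary, so that in the layered embedding of $G_{\lambda,j}$ obtained from the embedding $P$, the number of distinct layers is at most $\lambda$. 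Hence $G_{\lambda,j}$ is $\lambda$-outerplanar.

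Then I would conclude by the Bodlaender bound: any $\lambda$-outerplanar graph has treewidth at most $3\lambda - 1$, so $\tw(G_{\lambda,j}) \le 3\lambda - 1$. Alternatively, and perhaps more robustly, one can build a tree (in fact path) decomposition directly: take bags consisting of $\lambda$ consecutive layers of the layered embedding of $G_{\lambda,j}$, sliding the window by one layer at a time; since every edge of a layered planar graph joins vertices in the same or adjacent layers, each edge is covered, and the width is governed by the planar-separator-style bound on the total size of $\lambda$ consecutive layers — which is exactly what yields the $3\lambda - 1$ figure for $\lambda$-outerplanar graphs.

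I expect the main obstacle to be the bookkeeping around the contraction step: one must verify carefully that recoloring and contracting each special layer $L_i$ does not create an edge between two vertices that end up in layers more than one apart in the layered embedding of $G_{\lambda,j}$, and that the "merged" boundary vertices can be consistently assigned to a single layer so that the resulting embedding is genuinely $\lambda$-outerplanar rather than merely $\lambda$-outerplanar-like. Handling the boundary cases (the first block $i < j$, and what happens if the embedding runs out before reaching a special layer) requires care but is routine once the layer structure is set up correctly.
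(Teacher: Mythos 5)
There is a genuine gap at the heart of your argument: the claim that $G_{\lambda,j}$ is $\lambda$-outerplanar is not established, and it is precisely the point the paper flags as the obstacle (``we cannot apply Bodlaender's result directly to $G_{\lambda,j}$ because it may have an unbounded outerplanarity index''). Contracting the edges inside every $\lambda$-th layer does not collapse the nesting depth of the embedding: each contracted layer still becomes a nonempty set of vertices lying strictly inside the block of layers above it, so the layered structure of $G_{\lambda,j}$ inherited from $P$ has roughly as many layers as $G$ itself. Your assertion that ``the number of distinct layers is at most $\lambda$'' is therefore false for the embedding you are using, and you give no re-embedding argument that would make it true. Your fallback construction is also broken: taking bags consisting of $\lambda$ consecutive layers does not bound the width, because a single layer of a planar graph can contain $\Theta(n)$ vertices; the $3\lambda-1$ bound for $\lambda$-outerplanar graphs is not proved by using unions of layers as bags.

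The idea that does work — and which your ``gluing through a single contracted boundary'' remark gestures at without developing — is that each contracted component of a special layer $L_i$ is the contraction of the (connected) outer face of a connected component of $G\setminus\bigcup_{l<i}L_l$, and hence becomes a \emph{cut vertex} separating the at most $\lambda$ outermost remaining layers from everything inside. The paper exploits this by induction on the number of nonempty contracted layers: the outer chunk is genuinely $\lambda$-outerplanar (treewidth $\le 3\lambda-1$ by Bodlaender), the inner components have treewidth $\le 3\lambda-1$ by induction, and their tree decompositions can be merged by attaching the decomposition of each inner component, rooted at a bag containing its cut vertex $x$, to a bag of the outer decomposition containing $x$. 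Gluing at a single shared vertex does not increase the width, and every path between the two sides passes through $x$, so the merged decomposition is valid. That cut-vertex/gluing argument on tree decompositions (not an outerplanarity claim about the whole graph) is the missing step your proposal needs.
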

\begin{proof}
The proof will use the that a $\lambda$-outerplanar has treewidth at most $3\lambda-1$, as shown by Bodlaender~\cite{Bodlaender1998}. 
Unfortunately we cannot apply Bodlaender's result directly to $G_{\lambda,j}$ because it may have an unbounded outerplanarity index.
However, we can still utilize the result.

Specifically, we prove the lemma by induction on the size of $S=\{i:L_i \neq \emptyset, \, i \equiv j\mod\lambda\}$. For the base case, if $|S|=0$,
then the graph $G_{\lambda,j}$ has at most $\lambda$ layers so
its is $\lambda$-outerplanar and thus has treewidth at most $3\lambda-1$.

For the induction step, consider $\bigcup_{i\leq j}L_i$ and 
$G_{\lambda,j}\setminus\bigcup_{i<j}L_i$. The former has treewidth
at most $3\lambda-1$ because it is $\lambda$-outerplanar. The latter 
has treewidth at most $3\lambda-1$ by the induction hypothesis.
To show that $G_{\lambda,j}$ has has treewidth at most $3\lambda-1$, 
observe that the outer face of a connected planar graph is itself connected.
Now consider a contraction step in the preprocessing algorithm.
This implies that each vertex in $L_j$ that remains after the contraction step must correspond to the outer face of a connected component in the graph $G\setminus\bigcup_{l<i}L_l$. 
Now let $T$ be tree decomposition $\bigcup_{i\leq j}L_i$ after the vertices in $L_j$ have been contracted.
Further, for each vertex $x\in L_j$, let $T_x$ be a tree decomposition of the corresponding connected component of $G_{\lambda,j}\setminus\bigcup_{i<j}L_i$. We may root each $T_x$ at a node containing $x$ in its bag. We can then merge $T$ with $T_x$ by connecting the root of $T_x$ to any node of $T$ that contains $x$ in its bag. We do this for each $x\in L_j$. This gives a tree
decomposition of width $3\lambda-1$. To see that the resultant tree decomposition is valid observe that the paths in $G_{\lambda,j}$ connecting a vertex of $T$ to a vertex of $T_x$ must necessarily pass through the vertex $x$, which corresponds to the outer face of the connected component induced by the verticess in $T_x$.
 \end{proof}

\begin{proof}[Proof of Theorem~\ref{thm:ptas}]
Let $\epsilon > 0$. We choose $\lambda = \lceil \frac{1+\epsilon}{\epsilon} \rceil$. For each $j$, we run the dynamic programming algorithm described in Section~\ref{section:boundedTW} on $G_{\lambda,j}$. Since $G_{\lambda,j}$ has bounded treewidth by Lemma~\ref{lem:bounded-treewidth} and Theorem~\ref{thm:boundedTW}, this can be done in polynomial time. Now using Lemma~\ref{lem:preprocess}, there exists a $j$ such that the algorithm produces a $(k,w\cdot \frac{\lambda-1}{\lambda})$-partition of $G_{\lambda,j}$ inducing a $(k,w\cdot \frac{\lambda-1}{\lambda})$-partition of $G$. We picked $\lambda$ such that $\frac{\lambda-1}{\lambda} \geq \frac{1}{1+\epsilon}$ so we get a $(1+\epsilon)$-approximation. Since this holds for any $\epsilon > 0$, we have a polynomial-time approximation scheme.
 \end{proof}

We can combine Theorem~\ref{thm:ptas} with Lemma~\ref{lem:singletonsreduction} and get the following corollary.
\begin{corollary}\label{cor:planarapproxscheme}
    For any $\epsilon > 0$, there exists an algorithm that can compute a $(k,\lfloor \frac{w_{OPT}}{(1+\epsilon)(2c+2)}\rfloor)$-partition for planar graphs.
\end{corollary}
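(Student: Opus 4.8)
The plan is to derive Corollary~\ref{cor:planarapproxscheme} by composing the two reductions we already have: Lemma~\ref{lem:singletonsreduction}, which converts a general $(k,w_{OPT})$-partition into a partition with singleton winning districts at a cost of a factor $2c+2$, and Theorem~\ref{thm:ptas}, which gives a PTAS for the singleton winning district gerrymandering problem in planar graphs. The key observation making the composition work is that the singleton objective is monotone under Lemma~\ref{lem:singletonsreduction}: that lemma asserts the existence of a $(k,\lfloor w_{OPT}/(2c+2)\rfloor)$-partition \emph{all of whose winning districts are singletons}. Hence, if we let $w^{\mathrm{sing}}_{OPT}$ denote the optimum of the singleton winning district problem on the same planar graph $G$, we immediately get $w^{\mathrm{sing}}_{OPT} \ge \lfloor w_{OPT}/(2c+2)\rfloor$.

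From there the argument is short. First I would invoke Theorem~\ref{thm:ptas} with the given $\epsilon>0$: it produces, in polynomial time, a $(k,w)$-partition of $G$ with only singleton winning districts and with $w \ge w^{\mathrm{sing}}_{OPT}/(1+\epsilon)$. Chaining the two bounds gives
\[
w \ \ge\ \frac{w^{\mathrm{sing}}_{OPT}}{1+\epsilon}\ \ge\ \frac{1}{1+\epsilon}\left\lfloor \frac{w_{OPT}}{2c+2}\right\rfloor .
\]
To land exactly on the stated bound $\lfloor w_{OPT}/((1+\epsilon)(2c+2))\rfloor$ one then just has to clean up the nested floors; since $\lfloor x/(1+\epsilon)\rfloor \ge \lfloor \lfloor x\rfloor/(1+\epsilon)\rfloor$ for $x = w_{OPT}/(2c+2)$, and the PTAS actually guarantees an integer number of winning districts at least $w^{\mathrm{sing}}_{OPT}/(1+\epsilon)$, we obtain at least $\lfloor w_{OPT}/((1+\epsilon)(2c+2))\rfloor$ winning (singleton, hence in particular winning) districts. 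Finally, a $(k,w')$-partition with singleton winning districts is in particular a $(k,w')$-partition in the original sense, so the claimed partition exists.

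I do not expect any serious obstacle here: the corollary is essentially a bookkeeping exercise that records the consequence of running the PTAS on top of the structural reduction. The only mild subtlety worth stating carefully is the direction of Lemma~\ref{lem:singletonsreduction} — it is an existence statement about singleton-winning partitions, which is exactly what is needed to lower-bound the singleton optimum $w^{\mathrm{sing}}_{OPT}$ rather than merely the original optimum. A second minor point is the interaction of the two floor/rounding losses; I would phrase the proof so that the $(1+\epsilon)$ loss from the PTAS is applied to the already-floored quantity $\lfloor w_{OPT}/(2c+2)\rfloor$ (or, equivalently, absorb both into a single outer floor), which is why the statement is written with the product $(1+\epsilon)(2c+2)$ inside one floor rather than as two separate floors. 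Beyond that, the proof is a two-line composition, and the running time is polynomial because both ingredients run in polynomial time for fixed $\epsilon$ and fixed $c$.
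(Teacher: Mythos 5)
Your proposal is correct and is exactly the paper's derivation: the paper obtains Corollary~\ref{cor:planarapproxscheme} by the one-line observation that Lemma~\ref{lem:singletonsreduction} lower-bounds the singleton-winning-district optimum by $\lfloor w_{OPT}/(2c+2)\rfloor$ and Theorem~\ref{thm:ptas} then recovers a $(1+\epsilon)$-fraction of that in polynomial time. Your write-up just spells out the floor-function bookkeeping that the paper leaves implicit, and that bookkeeping checks out.
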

We remark, that while the approximation constant obtained above is better than the one of Theorem~\ref{thm:planarapproxlinear}, the running time is significantly worse.

\section{Conclusion}\label{sec:conclusion}
Several interesting open problems remain.
First, we proved that finding one winning district in a planar graph is NP-hard.
But this reduction required a large number of candidates. 
If the number of candidates is a constant, 
is there a polytime algorithm to find one winning district?
Second, we presented approximation algorithms for planar graphs, given $w_{OPT}$ is sufficiently large. Our hardness result proves that this condition is necessary, but can the approximation guarantee be improved? Specifically, is there a polynomial time approximation scheme (PTAS) for the gerrymandering problem in planar graphs when $w_{OPT}$ is sufficiently large? 
%
Third, as a special case, our approximation algorithms provided solutions where the winning districts are singletons. 
In most natural applications, the sizes of the districts are expected to be similar. Is there an approximation algorithm for planar graphs when the sizes of the districts are (roughly) equal?


 \newpage



\section*{Ethics Statement}
This research is purely theoretical and a long term aim of this line of research is to understand, recognize and prevent abuses of electoral systems. This paper strictly upholds the principles of research ethics, research integrity, and social responsibility. Therefore, the findings presented in this paper do not encompass any elements that could potentially have adverse societal impacts.



\bibliographystyle{ACM-Reference-Format} 
\bibliography{AAMASbib}


\clearpage
\appendix
\section{A Dynamic Program for Graphs of Bounded Treewidth}\label{sec:missingDPproof}

In this appendix, we present a dynamic program for the gerrymandering problem in graphs of bounded treewidth. In particular, we prove Theorem~\ref{thm:boundedTW}.

\boundedTW*
We begin with the necessary definitions concerning treewidth.
\begin{definition}\label{def:treewidth}
A \emph{tree decomposition} of a graph $G$ is a pair $(T,\beta)$ of a tree $T$
and $\beta:V(T) \rightarrow 2^{V(G)}$, 
such that:
\begin{itemize}
    \item $\bigcup\limits_{v \in V(T)} \beta(v) = V(G)$, and
    \item\label{item:twedge} for any edge $\{\rho,\rho'\} \in E(G)$ there exists a node $v \in V(T)$ such that $\rho,\rho' \in \beta(v)$, and
    \item\label{item:twconnected} for any vertex $\rho \in V(G)$, the subgraph of $T$ induced by the set $T_\rho = \{v\in V(T): \rho\in\beta(v)\}$ is a tree.
\end{itemize}
The {\em width} of $(T,\beta)$ is $\max\limits_{v\in V(T)}\{|\beta(v)|\}-1$. The {\em treewidth} of $G$ is the minimum width of a tree decomposition of $G$.
\end{definition}
For $v \in V(T)$, we say that $\beta(v)$ is the \emph{bag} of $v$. Let $\tw(G)$ denote the treewidth of $G$. Now if $(T,\beta)$ denotes a tree decomposition of $G$, we can transform in polynomial time $(T,\beta)$ into a {\em nice} tree decomposition of the same width~\cite{DBLP:journals/siamcomp/Bodlaender96} and with $O(n)$ nodes. A nice tree decomposition is one where $T$ is a binary tree, a node with two children has the same bag as its children, and any other node either has an empty bag or a bag that differs by exactly one vertex from the bag of its child. Moreover, the bag of the root, denoted by $r$, is empty. 

Let $G = (V, E)$ be a weighted graph with weight function $wt: V(G) \rightarrow \mathbb{N}$, and let $T$ be a nice tree decomposition of $G$ with treewidth $\tw(G)$. For a node $X$ of the tree decomposition, a \textit{configuration} of $X$ is a tuple $(\kappa,\omega,P,\gamma)$. A configuration of $X$ is \textit{valid} if there exists a partition $\mathbf{P}$ of the subgraph induced by the subtree rooted at $X$ such that:
\begin{itemize}
    \item $\kappa\in \{1,\dots,k\}$ is the number of districts in $\mathbf{P}$
    \item $\omega \in \{1,\dots,\kappa\}$ is the number of districts won by candidate $1$ (the blue candidate).
    \item $P$ is a partition of $\beta(X)$, and two vertices in $\beta(X)$ are in the same district in $P$ if and only if they are in the same district in $\mathbf{P}$.
    \item $\gamma: P \times \{1,\dots,c\}  \rightarrow \{1,\dots,\sum\limits_{v\in V} wt(v)\}$ is a function; $\gamma(D,i)$ indicates for any district $D$ of $X$ and any color $i\in \{1,\dots,c\}$ the number of votes received by candidate $i$ in the district $\mathbf{D}\in \mathbf{P}$ such that $D\subset \mathbf{D}$.
\end{itemize}
In that case we say that $(\kappa,\omega,P,\gamma)$ is \textit{coherent} with $\mathbf{P}$.

The dynamic program calculates the function $\varphi$, which takes a node $X$ of the tree decomposition as input and generates a list of all valid configurations. We employ a bottom-up approach, whereby the dynamic program first processes the leaf nodes of the tree decomposition $T$, and then computes $\varphi({X})$ for all nodes in $T$.

The initialization step is straightforward since all leaf nodes are empty. The updating steps vary depending on the type of node, and the following programs provide explanations for each type.\\

\noindent\textbf{Join Node:} Suppose $X$ is a join node with children $L$ and $R$, and we have already computed $\varphi(L)$ and $\varphi(R)$. Let $(\kappa_L,\omega_L,P_L,\gamma_L)$ and $(\kappa_R,\omega_R,P_R,\gamma_R)$ be two valid configurations of $L$ and $R$, respectively. Then there exist two partitions $\mathbf{P}_L$ and $\mathbf{P}_R$ of the subgraphs induced by the subtrees rooted in $L$ and $R$, respectively, such that $\mathbf{P}_L$ is coherent with $(\kappa_L,\omega_L,P_L,\gamma_L)$ and $\mathbf{P}_R$ is coherent with $(\kappa_R,\omega_R,P_R,\gamma_R)$.

We can merge $\mathbf{P}_L$ and $\mathbf{P}_R$ into a new partition $\mathbf{P}$ if and only if $P_L = P_R$. In this case, the resulting configuration $(\kappa,\omega,P,\gamma)$ coherent with $\mathbf{P}$ is:
\begin{itemize}
    \item We have $P= P_L = P_R$.
    \item A district is either only in $\mathbf{P}_L$, only in $\mathbf{P}_R$ or it intersects $\beta(X)$. Therefore $\kappa = \kappa_L + \kappa_R - |P|$. If $\kappa > k$, the configuration is not valid.
    \item Similarly for any district $D$ of $P$ and any color $i$, we have \[\gamma(D,i) = \gamma_L(D,i) + \gamma_R(D,i) - \sum\limits_{v\in D:v\text{ colored }i} wt(v).\]
    \begin{sloppypar}
    \item To compute the number of winning districts, we first define $W_L$, $W_R$ and $W$ as the sets of districts in $P$ that are won by blue in the corresponding districts of $\mathbf{P}_L$, $\mathbf{P}_R$ and $\mathbf{P}$ respectively. Specifically, $W_L := \{D\in P: \gamma_L(D,1)>\gamma_L(D,i) \text{ for any color }i \neq 1\}$, $W_R := \{D\in P: \gamma_R(D,1)>\gamma_R(D,i) \text{ for any color }i \neq 1\}$ and $W := \{D\in P: \gamma(D,1)>\gamma(D,i) \text{ for any color }i \neq 1\}$. Next, we compute the number of winning districts in $\mathbf{P}_L$ and $\mathbf{P}_R$ that do not intersect $\beta(X)$. Specifically, the number of winning districts in $\mathbf{P}_L$ is $\omega_L - |W_L|$, and the number of winning districts in $\mathbf{P}_R$ is $\omega_R - |W_R|$. Finally, we compute the number of winning districts in the merged partition $\mathbf{P}$: $\omega = \omega_R + \omega_L - |W_L| - |W_R| + |W|$.
    \end{sloppypar}
\end{itemize}

To compute $\varphi(X)$ from $\varphi(L)$ and $\varphi(R)$, we need to consider all possible pairs of $(\kappa_L,\omega_L,P_L,\gamma_L) \in \varphi(L) $ and $(\kappa_R,\omega_R,P_R,\gamma_R)\in \varphi(R)$. For each pair, we check if a valid configuration exists. If so, we add it to a list and remove duplicate entries.\\

\noindent\textbf{Forget Node:} Assume $X$ is a forget node with a child $Y$ such that $\beta(X) = \beta(Y)\setminus {v}$.  Let $(\kappa_Y,\omega_Y, P_Y, \gamma_Y)$ be a valid configuration for $Y$ coherent with a partition $\mathbf{P}$ of the subgraph induced by the subtree rooted in $Y$. Note that the subtree rooted at $X$ induces the same subgraph as the subtree rooted at $Y$. There are two cases to consider:

Case 1: ${v}\in P_Y$ is a singleton district. In this case, the configuration $(\kappa,\omega,P,\gamma)$ of $X$ coherent with $\mathbf{P}$ is given by $\kappa = \kappa_Y$, $\omega = \omega_Y$, $P = P_Y\setminus{{y}} $, and $\gamma = \gamma_Y|P$ where $\gamma_Y|P$ is defined to be the same as $\gamma_Y$ for each district in $P$.

Case 2: $v \in D$ for some district $D\in P_Y$. Then, the configuration coherent with $\mathbf{P}$ is given by:
$\kappa = \kappa_Y$, $\omega = \omega_Y$, $P$ is obtained from $P_Y$ by removing ${v}$ from the district $D$, so $P = (P_Y\setminus {D}) \cup ({D \setminus{v}})$. We have $\gamma(D',i) = \gamma_Y(D',i)$ for any color $i$ and district $D \neq D \setminus{v}$, and $\gamma(D\setminus{v},i) = \gamma_Y(D,i)$ for any color $i$.\\

\noindent\textbf{Introduce Node:} Assume $X$ is an introduce node with child $Y$ such that $\beta(X) = \beta(Y) \cup \{v\}$. Let $(\kappa_Y,\omega_Y, P_Y, \gamma_Y)$ be a valid configuration of $Y$ coherent with a partition $\mathbf{P}$ of the subgraph induced by the subtree rooted in $Y$. Note that then the subgraph induced by the subtree rooted at $X$ is the same as the one rooted at $Y$ with an extra vertex $v$. We need to compute all possible configurations $(\kappa, \omega, P, \gamma)$ obtained by extending $\mathbf{P}$. Define $N(v):=\{u \in \beta(X):(u,v)\in E\}$, the neighborhood of $v$ in the graph induced by $\beta(X)$. For any subset $S\subseteq N(v)$ ($S$ can be empty), the districts induced by the vertices in $S$ can be connected using $v$ and merged. In that case the configuration $(\kappa, \omega, P, \gamma)$ can be computed from $(\kappa_Y,\omega_Y, P_Y, \gamma_Y)$. First we define the set $\Delta_S := \{D\in P_Y : D \cap S \neq \emptyset \}$, the set of districts of $P_Y$ intersecting $S$.
\begin{itemize}
\item The new partition is obtained by merging districts in $\Delta_S$ with $\{v\}$: $$P = P_Y\setminus \Delta_S \bigcup \left\{ (\bigcup\limits_{D\in\Delta_S} D) \cup \{v\}\right\}$$
\item The number of districts is $\kappa = \kappa_Y - |\Delta_S|+1$.
\item For any color and any district $D\in P_Y$ not in $\Delta_S$, we have $\gamma(D,i) = \gamma_Y(D,i).$ For the district $(\bigcup\limits_{D\in\Delta_S} D) \cup \{v\}$, we have \[\gamma((\bigcup\limits_{D\in\Delta_S} D) \cup \{v\},i) = \sum \limits_{D\in \Delta_S} \gamma_Y(D,i) + \begin{cases} wt(v) & \text{ if } v\text{ colored } i \\
0 & \text{ otherwise.}\end{cases}\]
\item To compute the updated number of winning districts, we need to first remove those winning districts in $\Delta_S$ and recompute the winning status after merging $\Delta_S$ with $\{v\}$. The number of winning districts in $\Delta_S$ is given by $|\{D\in \Delta_S: \gamma_Y(D,1)>\gamma_Y(D,i), \forall i \not=1\}|$. The new merged district $(\bigcup\limits_{D\in\Delta_S} D) \cup \{v\}$ is winning if \[\gamma((\bigcup\limits_{D\in\Delta_S} D) \cup \{v\},1) > \gamma((\bigcup\limits_{D\in\Delta_S} D) \cup \{v\},i), \forall i\not=1.\] Concretely:
\begin{multline*}
    \omega = \omega_Y - |\{D\in \Delta_S: \gamma_Y(D,1)>\gamma_Y(D,i), \forall i \not=1\}| \\+ \begin{cases}
    1, &\text{if } \gamma((\bigcup\limits_{D\in\Delta_S} D) \cup \{v\},1) > \gamma((\bigcup\limits_{D\in\Delta_S} D) \cup \{v\},i), \forall i\not=1\\
    0, &\text{otherwise. }
\end{cases}
\end{multline*}
\end{itemize}

Finally, at the root node, the largest $\omega$ is the maximum number of winning districts for the target candidate $1$ satisfying the constraint that the total number of districts is at most~$k$.\\

\noindent\textbf{Time Complexity:} Let $S$ be the number of possible configurations $(\kappa, \omega, P, \gamma)$ of a single bag. For each configuration, there are $k$ possible values for $\kappa$ and $\omega$; the number of partitions is upper bounded by $\tw^{\tw}$; and the number of functions $\gamma: P \cdot \{1,\dots, c\} \rightarrow \{1,\dots, \sum\limits_{v\in V} wt(v)\}$ is at most $(\sum\limits_{v\in V} wt(v))^{c\cdot\tw}$. So $S \le k^2 \cdot \tw^{\tw} \cdot (\sum\limits_{v\in V} wt(v))^{c\cdot\tw} \le n^{2+tw} \cdot (\sum\limits_{v\in V} wt(v))^{c\cdot\tw}$. We now bound the time to compute all the valid configurations of a node, given the valid configurations of its child nodes. 

For a join node, the algorithm will consider all pairs of configurations of its child nodes and there are at most $S^2$ such pairs. For each pair of configuration, it takes constant time to compute $P$ and $\kappa$; and at most $O(c\cdot\tw) = O(n^2)$ time to compute both $\gamma$ and $\omega$, so the running time in this case is bounded by $O(S^2 \cdot n^2)$.  For a forget node, there is only one child with $S$ different configurations to consider. Fixing a configuration, it takes constant time to compute all parameters but $\gamma$, which takes time $O(c\cdot\tw) = O(n^2)$. For a introduce node, there is also only one child. Fixing a configuration, the algorithm needs to compute all possible partitions. The number of possible choice is bounded by $(\tw+1)$. The running time for computing $\omega$ and $\gamma$ is at most $O(c\cdot\tw) = O(n^2)$ each. It follows that the time complexity for computing a single node is dominated by the join nodes. Therefore the total running time is bounded by $O(n)\cdot O(S^2 \cdot n^2) = O(n^{2\tw+7}\cdot (\sum\limits_{v\in V} wt(v))^{2c\cdot\tw})$.\\

This completes the proof of Theorem~\ref{thm:boundedTW}.

\section{Hardness Proofs}\label{sec:missinghardnessproof}

In this appendix, we give the prove our hardness result for gerrymandering in general graphs.

\hardness*
\begin{proof}
    Let $G$ be a graph with $|V(G)|=n$. We derive an auxiliary graph $H$ from $G$ such that there exists an  independent set of size $w$ in $G$ if and only if there exists a $(n,w)$-partition of $H$. The construction of $H$, illustrated in
    Figure~\ref{fig:hard-general}, is as follows:
    \begin{itemize}
        \item For each vertex $v \in V(G)$, there is a blue vertex $b^v\in V(H)$.
        \item For each edge $e=(u,v) \in E(G)$, we create $2n$ red vertices $r^e_i\in V(H)$ for $i = 1,\dots,2n$. Each such red vertex has edges in $E(H)$ to both $b_v$ and $b_u$.
    \end{itemize}
\begin{figure}[h]
\centerline{\includegraphics[width = 8cm]{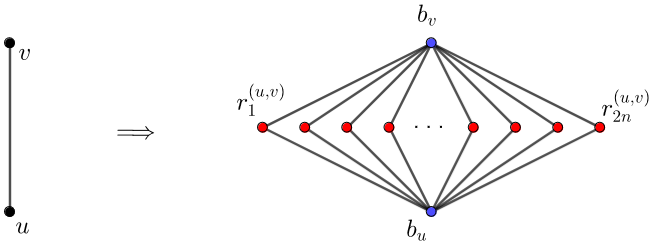}}
\caption{Construction of $H$ in the proof of Lemma~\ref{lem:hardness}}\label{fig:hard-general}
\end{figure}

Now consider the gerrymandering problem in $H$ with $n$ districts. The first task is to prove that in a valid decomposition of $H$ into $n$ connected districts any winning district must be a blue singleton vertex. Suppose we have a winning district $D_i$ that is not a blue singleton. Then $D_i$ must contain at least two blue vertices. 
Furthermore, by the connectivity requirement, there must be vertices $b_u, b_v \in D_i$ such that the edge $e = (u,v)\in E(G)$. Because $H$ only has $n$ blue vertices, $D_i$ can have at most $n-1$ red vertices. Consequently, there must be at 
least $n+1$ red vertices of the form $r_j^e$, for some $j \in \{ 1,\dots,2n\}$, which are {\bf not} in $D_i$. 
But these red vertices must each then form a singleton district. Thus there are at least $n+1$ districts, a contradiction.

We will now demonstrate that the set $\{b^v: v\in S\}$ is a collection of blue winning singleton-districts in $H$ if and only if $S$ is an independent set in $G$. First, assume $S$ is an independent set in $G$. We create a partition of $V(H)$ such that $V(H) = \bigcup_{v\in V(G)} D_v$ and start with $D_v = {b_v}$, for each $v\in V(G)$. 
Now since $S$ is an independent set, $V(G)\setminus S$ is a vertex cover in $G$.
Thus, for every $e\in E(G)$, there exists $v_e \in V(G)\setminus S$ which is an endpoint of $e$. We add all vertices of the form $r^e_i$ to $D_{v_e}$. Observe this induces a partition of $H$ into $n$ connected districts where $\{b^v: v\in S\}$ in $H$ are a collection of blue-winning singleton districts.
Second, assume that $\{b^v: v\in S\}$ is a collection of blue-winning singleton districts in $H$. Assume for a contradiction 
that $S$ is not an independent set in $G$. Then there exists an edge $(u,v)\in E(G)$ where $\{u, v\}\subseteq S$. This implies all the vertices $r_j^e$, for $j \in \{ 1,\dots,2n\}$, must be singleton districts, contradicting the fact there are $n$ districts. 
 \end{proof}

\hardgeneral*

\begin{proof}
According to the result of Zuckerman \cite{Z2006}, the maximum independent set (or clique) problem cannot be approximated within a factor of $n^{1-\epsilon}$ unless $P = NP$, strengthening the work of H\r{a}stad~\cite{Hstad1999}. The theorem then follows from Lemma~\ref{lem:hardness}, because the number of vertices of the graph $H$ used in the proof is bounded by $|V(G)| + 2\cdot |V(G)| \cdot |E(G)|  \leq |V(G)|^3$.
 \end{proof}

\end{document}